\documentclass[12pt,reqno]{amsart}
\usepackage{amsmath,amssymb}
\usepackage{hyperref}
\usepackage{graphicx}
\usepackage{booktabs,tabularx,array}
\usepackage{dsfont} 
\usepackage[margin=1in]{geometry}
\newtheorem{theorem}{Theorem}[section]
\newtheorem{lemma}[theorem]{Lemma}
\newtheorem{proposition}[theorem]{Proposition}
\newtheorem{definition}[theorem]{Definition}
\newtheorem{example}[theorem]{Example}
\newtheorem{remark}[theorem]{Remark}
\newtheorem{corollary}[theorem]{Corollary}

\DeclareMathOperator{\id}{id}
\DeclareMathOperator{\tr}{tr}

\title[Entropy Contraction of Generalized Quantum Depolarization]
{Tight Entropy Contraction of Generalized Quantum Depolarization}

\author{Li Gao}
\address{School of Mathematics and Statistics, Wuhan University, Wuhan 430072, China}
\address{Wuhan Institute of Quantum Technology, Wuhan 430075, China}
\thanks{School of Mathematics and Statistics, Wuhan University, Wuhan 430072, China}
\thanks{Wuhan Institute of Quantum Technology, Wuhan 430075, China}
\email{gao.li@whu.edu.cn}
\author{Long Zhao}
\address{School of Mathematics and Statistics, Wuhan University, Wuhan 430072, China}
\email{zhaolong@whu.edu.cn}
\date{\today}

\begin{document}

\begin{abstract}We establish upper and lower bounds for 
relative entropy contraction of generalized quantum depolarizing channels and semigroups.
Our bound provides tight first order asymptotic of the contraction rate in terms of the dimension constant. 
One side estimate are based on sharp reverse ratio and convexity of relative entropy of two states, which can be derived from the recently introduced
Hockey-Stick quantum $f$-divergence, and also independently, Bogoliubov--Kubo--Mori quantum Fisher information metric. The other side follows from the existence of index achieving pure state with respect to a general conditional expectation. Our results extend to the complete entropy contraction rate, tensor stable estimates for product dynamics.
Examples include quantum depolarization, dephasing, and
compact group symmetrization, with consequences for the decay rate of coherence and
asymmetry.
\end{abstract}

\maketitle

\everymath{\displaystyle}
\section{Introduction}
\label{sec:introduction}

Relative entropy is a fundamental measure in both classical and
quantum information  theory. Its classical form originates in Shannon's information
theory~\cite{Shannon1948} and is known as the Kullback--Leibler
divergence~\cite{KullbackLeibler1951}. Its quantum version was introduced
by Umegaki~\cite{Umegaki1962} and extended to general von Neumann algebras
by Araki~\cite{Araki1976}. In quantum information theory, relative entropy
plays a central role in quantifying distinguishability between quantum states, with wide applications
ranging from hypothesis testing to quantum information processing
~\cite{OhyaPetz1993,Ruskai2002,Vedral2002}. For two quantum states
\(\rho\) and \(\sigma\) with compatible support, it is defined as
\[
D(\rho\|\sigma)=\tr[\rho(\ln\rho-\ln\sigma)].
\]
A fundamental property underlying
its wide applications is the data-processing inequality: for any quantum channel
\(\Phi\),
\[
D(\Phi(\rho)\|\Phi(\sigma))
\le D(\rho\|\sigma).
\]

The data-processing inequality is qualitative: it describes the 
distinguishability is monotone non-increasing but does not quantify the rate of entropy contraction.
Quantitative refinements, such as strong data-processing inequalities (SDPIs)
and modified logarithmic Sobolev inequalities (MLSIs), provide explicit
contraction rates and characterize convergence of Markovian dynamics
\cite{DiaconisSaloffCoste1996,Raginsky2016}. In the quantum setting,
connections between logarithmic Sobolev inequalities and other functional
inequalities, such as hypercontractivity, Poincar\'e inequalities, and
transport cost inequalities, have been extensively developed for quantum
Markov semigroups~\cite{OlkiewiczZegarlinski1999,KastoryanoTemme2013,
DattaRouze18,RouzeDatta2019,carlen2017gradient,CarlenMaas18}. Related
inequalities for non-primitive quantum Markov semigroups and complete
formulations for quantum channels have been developed in
\cite{BardetRouze2022,GaoRouze2022}. However, obtaining explicit and
dimension independent contraction estimates for general quantum dynamics,
especially in the presence of nontrivial fixed-point algebras, remains
challenging.

A staring point of this work is the following sharp ratio bounds for relative entropy.
\begin{proposition}[Entropy ratio bounds]
\label{thm:intro-ratios}
Let \(\rho,\sigma\) be two quantum states, let \(0<p<1\), and set
\[
 C:=\inf\{c>0:\rho\le c\sigma\}>1,
 \qquad
 \phi(x):=x\ln x-x+1.
\]
Then
\begin{align}
 \frac{D(\sigma\|\rho)}{D(\rho\|\sigma)}
 &\ge \frac{C-1-\ln C}{C\ln C-C+1},
 \label{intro-reverse-ratio}\\
 \frac{\phi\bigl(1+p(C-1)\bigr)}{\phi(C)}
 &\ge
 \frac{D\bigl(p\rho+(1-p)\sigma\big\|\sigma\bigr)}
 {D(\rho\|\sigma)}
 \ge \phi(1-p).
 \label{intro-convexity-ratio}
\end{align}
\end{proposition}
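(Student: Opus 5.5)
The plan is to reduce all three inequalities to pointwise inequalities for scalar functions on $[0,C]$. The tool is the hockey-stick representation of quantum $f$-divergences. For $s\ge 0$ put $F(s):=\tr(\rho-s\sigma)_+$. This function is convex and non-increasing, with $F(0)=1$, $F(s)\ge (1-s)_+$, and $F(s)=0$ for $s\ge C$ by definition of $C$. The reverse hockey-stick is not new data: from $\tr(X)_+=\tr(-X)_++\tr X$ we get
\[
E_\gamma(\sigma\|\rho)=\tr(\sigma-\gamma\rho)_+=\gamma F(1/\gamma)+1-\gamma .
\]
Substituting this into the integral representation $D_f(\rho\|\sigma)=\int_1^\infty\bigl[f''(\gamma)E_\gamma(\rho\|\sigma)+\gamma^{-3}f''(1/\gamma)E_\gamma(\sigma\|\rho)\bigr]d\gamma$, and changing variables $s=1/\gamma$ in the second part, gives
\[
D_f(\rho\|\sigma)=\int_0^C f''(s)\bigl(F(s)-(1-s)_+\bigr)\,ds .
\]
We apply this with $f=\phi$, which gives $D(\rho\|\sigma)$, and with $f=g$, $g(x):=x-1-\ln x$, which gives $D(\sigma\|\rho)$ (indeed $D(\sigma\|\rho)=D_g(\rho\|\sigma)$). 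The key point is that every quantity in the statement is a linear functional of the single function $F$.

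For the convexity ratio we use the same device. With $\tau_p:=p\rho+(1-p)\sigma$ one has $\tr(\tau_p-s\sigma)_+=p\,F\bigl((s-1+p)/p\bigr)$. Hence $D(\tau_p\|\sigma)$ is again a linear functional of $F$, supported on $s\in[1-p,\,1+p(C-1)]$.

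We now optimize over $F$. Let $\mathcal K$ be the set of admissible $F$: convex on $[0,C]$, with $F(0)=1$, $F\ge(1-s)_+$, $F(C)=0$ and slope at $0^+$ at least $-1$. Classically, these are exactly the functions $s\mapsto \mathbb E_\sigma(X-s)_+$ for a likelihood ratio $X\in[0,C]$ with $\mathbb E_\sigma X=1$. The extreme points of $\mathcal K$ are the piecewise linear functions with one interior kink, i.e.\ two-point laws of $X$ with mean $1$.

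Each target inequality has the form $N(F)-\kappa M(F)\ge0$, with $N,M$ linear and $\kappa$ the claimed constant. Such a functional is linear on the convex set $\mathcal K$, so it is minimized at an extreme point. It therefore suffices to prove the classical statement. In that case $D_f=\mathbb E_\sigma f(X)$, and it is enough to check pointwise that $g(x)\ge \kappa\,\phi(x)$ on $[0,C]$ with $\kappa=g(C)/\phi(C)$. This holds because $x\mapsto g(x)/\phi(x)$ is decreasing on $(0,\infty)$; at $x=1$ the ratio is extended by continuity with value $1$.

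For \eqref{intro-convexity-ratio} the pointwise claim is that
\[
x\mapsto \frac{\phi(1+p(x-1))}{\phi(x)}
\]
is increasing on $[0,C]$. It equals $\phi(1-p)$ at $x=0$ and the stated upper value at $x=C$, which gives both bounds. The constants are sharp: the mixture of a point mass $\varepsilon$ at $C$ with the remaining mass near $1$ attains them as $\varepsilon\to0$, and this matches the form of the statement.

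The main obstacle is making the linear-programming reduction rigorous in the quantum setting. Two things need care. First, one must show that $F$ arising from quantum states satisfies exactly the classical constraints, so that the quantum problem is no worse than the classical one; only this inclusion is needed for a lower bound. Second, the ratio minimization must be handled as minimizing a linear functional $N-\kappa M$ rather than a ratio. Checking the two monotonicity claims for the scalar ratios is routine one-variable calculus, done by writing the derivative as a sign-determined combination of logarithms. I would also keep in reserve the BKM route: write $D(\rho\|\sigma)=\int_0^1(1-t)\,I_{\mathrm{BKM}}(\rho_t;\rho-\sigma)\,dt$ along $\rho_t=(1-t)\sigma+t\rho$. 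This gives an independent check of \eqref{intro-convexity-ratio} in case the hockey-stick bookkeeping at the endpoints becomes delicate.
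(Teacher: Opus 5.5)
Your proposal is correct and is essentially the paper's first proof. Your curve $F(s)=\tr(\rho-s\sigma)_+$ is the integrated tail of the paper's Riemann--Stieltjes distribution, since $-F'(s^+)=1-Q_{\rho,\sigma}(s)$, so your classical variable $X$ has law $dQ_{\rho,\sigma}$ on $[0,C]$. Both arguments then bound the ratio of integrals by the extrema of the monotone pointwise quotients $\phi^\diamond/\phi$ and $\phi(a_p(\cdot))/\phi$, and the BKM route you hold in reserve is the paper's second proof. The extreme-point/linear-programming step is unnecessary: integrating the pointwise inequality against the law of $X$ already covers every admissible $F$.
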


The first inequality gives a universal reverse relative entropy estimate,
while the second provides two-sided bounds for the entropy contraction along
the linear interpolation \(\rho_p=p\rho+(1-p)\sigma\). These bounds essentially follows from the comparison theorem of Hockey-Stick quantum $f$-divergence \cite[Theorem 4.2]{BeigiHircheTomamichel2025} with explicit analysis on functions parameter corresponding the relative entropy, its reverse, and also convex combination variants. We also provide an independent argument using
Bogoliubov--Kubo--Mori metric and Chebyshev's integral inequality. The remarkable feature is
that all bounds are determined only by the minimal order parameter
\(C\), or equivalently, the max relative entropy $D_{\max}(\rho||\sigma)=\ln C$ \cite{datta2009min}. All the bounds above are optimal given $C$, as they generalize the optimal comparison of classical $f$-divergence by \cite{SasonVerdu2016}.

We apply the above entropy ratios to study the entropy contraction rates of generalized quantum depolarizing process given by faithful conditional expectations.
Namely, for am idempotent quantum channel $E=E^2$ with a faithful invariant state,
we consider the following analog of depolarizing
semigroup \(P_t^E\) and channel \(\Phi_p^E\):
\[
 P_t^E=e^{-t}\id+(1-e^{-t})E,
 \qquad
 \Phi_p^E=p\,\id+(1-p)E.
\]
 This definition includes the standard
depolarizing and dephasing cases.  For depolarization, \(E\) is the replacer
channel \(E(\rho)=\tr(\rho)\sigma\); for dephasing, it is the pinching map
\(E(\rho)=\sum_i |i\rangle\langle i|\rho|i\rangle\langle i|\).
An initial state \(\rho\) always compared to its projection \(E(\rho)\).
The max-relative entropy of this pair is controlled by the following
structural constant of \(E\):
\[
 C(E):=\inf\{C>0:\rho\le CE(\rho)\text{ for every state }\rho\}.
\]
Here \(\leq\) means the positivity order.
This order constant is called the Pimsner--Popa index
introduced in~\cite{PimsnerPopa1986}.  Its completely bounded amplification
\[
 C_{\mathrm{cb}}(E):=\sup_{k\ge1}C(E\otimes\id_{M_k})
\]
plays the corresponding role for complete entropy inequalities.  These index
quantities provide a dimension measure of how far an initial state $\rho$ can be away from its
 conditional expectation $E(\rho)$ in the fixed-point space.
  For
\(C>1\), we set two functions
\[
 \mathsf A(C):=
 \frac12\left(1+
 \frac{C-1-\ln C}{C\ln C-C+1}\right),
 \qquad
 \mathsf B_p(C):=
 \frac{\phi(1+p(C-1))}{\phi(C)}.
\]
At \(C=1\), these functions are defined by continuity:
\(\mathsf A(1)=1\) and \(\mathsf B_p(1)=p^2\).
Define \(\eta(\Phi_p^E)\) and \(\alpha(P_t^E)\) as the optimal constants such
that
\[
\begin{aligned}
 D(\Phi_p^E(\rho)\|E(\rho))
 &\le \eta(\Phi_p^E)D(\rho\|E(\rho)),\\
 D(P_t^E(\rho)\|E(\rho))
 &\le e^{-2\alpha(P_t^E)t}D(\rho\|E(\rho)).
\end{aligned}
\]
Combining Proposition 1.1 with estimate from the other side, we obtain the tight first order
asymptotic for the entropy contraction rates. Moreover, applying these two-sided bounds 
to the corresponding entropy-contraction functionals yields the following parallel bounds.

\begin{theorem}[Tight entropy contraction ratio]
\label{cor:intro-entropy-contraction-bounds}
For the above dynamics, 
\[
 \begin{aligned}
 A(C(E))\le \alpha(P_t^E)&\le \mathsf U(C),\\
 L_p(C)\le \eta(\Phi_p^E)&\le\mathsf B_p(C(E)),
 \end{aligned}
\]In particular, we have 
\begin{align}
 \alpha(P_t^E)
  &=\frac12+\frac{1}{2\ln C}
    +O\!\left(\frac{\ln\ln C}{(\ln C)^2}\right),\\
  \eta(\Phi_p^E)
  &=p+\frac{p\ln p}{\ln C}
    +O\!\left(\frac{\ln\ln C}{(\ln C)^2}\right).
\end{align}
\end{theorem}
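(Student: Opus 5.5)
Write $C=C(E)$ throughout. The plan is to get the lower bound on $\alpha$ and the upper bound on $\eta$ from Proposition~\ref{thm:intro-ratios} applied to the pair $(\rho,E(\rho))$. The reverse bounds ($\mathsf U(C)$ and $L_p(C)$) will come from an explicit near-extremal state, namely a pure state attaining the index. The asymptotics then follow by expanding the closed forms in $1/\ln C$.

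\textbf{Upper bound for $\eta$ and lower bound for $\alpha$.} Fix a state $\rho$ and set $\sigma=E(\rho)$. By definition of the index, $\rho\le C\sigma$, so $D_{\max}(\rho\|\sigma)\le\ln C$. Since $E$ is idempotent and $E\circ\Phi_p^E=E$, we have $\Phi_p^E(\rho)=p\rho+(1-p)\sigma$ with the same reference state $\sigma$. The left half of \eqref{intro-convexity-ratio} bounds the ratio by $\phi(1+p(C'-1))/\phi(C')$, where $C'\le C$ is the order constant of the pair. I will check that $C\mapsto\mathsf B_p(C)$ is nondecreasing, by a derivative computation using convexity of $\phi$; this gives $\eta(\Phi_p^E)\le\mathsf B_p(C)$.

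For the semigroup I plan to differentiate. Along $\rho_t=P_t^E(\rho)$ the entropy production is
\[
-\tfrac{d}{dt}D(\rho_t\|\sigma)=\tr[(\rho_t-\sigma)(\ln\rho_t-\ln\sigma)]=D(\rho_t\|\sigma)+D(\sigma\|\rho_t).
\]
Note that $\rho_t\le C_t\sigma$ with $C_t=1+e^{-t}(C-1)\le C$. Applying \eqref{intro-reverse-ratio} gives $D(\sigma\|\rho_t)\ge r(C_t)D(\rho_t\|\sigma)\ge r(C)D(\rho_t\|\sigma)$, provided the ratio $r(C)=\frac{C-1-\ln C}{C\ln C-C+1}$ is decreasing in $C$. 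Grönwall then yields decay at rate $1+r(C)=2\mathsf A(C)$, i.e.\ $\alpha\ge\mathsf A(C)$.

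\textbf{Reverse bounds.} I will use the earlier-stated fact that there exists a pure state $\psi$ with $\psi\le C E(\psi)$ sharp, i.e.\ $D_{\max}(\psi\|E(\psi))=\ln C$. For pure $\psi$ one has $D(\psi\|E(\psi))=-\langle\psi|\ln E(\psi)|\psi\rangle$. The aim is to show this is at least of order $\ln C$ minus a lower-order term. I would use the spectral weight of $\psi$ on the eigenspace of $E(\psi)$ with eigenvalue $1/C$, together with concavity of the logarithm. Then $D(\Phi_p^E(\psi)\|E(\psi))$ can be bounded below by restricting to the two-level structure spanned by $\psi$ and its complement. The mixture has eigenvalue about $p+(1-p)/C$ along $\psi$, which gives a ratio of about $p\,\frac{\ln(pC)}{\ln C}=p+\frac{p\ln p}{\ln C}$. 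This produces $L_p(C)$. Similarly, the Grönwall analysis gives $\mathsf U(C)\approx\frac12+\frac{1}{2\ln C}+O(\ln\ln C/\ln^2C)$.

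\textbf{Asymptotics and main obstacle.} Expanding in $1/\ln C$:
\[
\mathsf A(C)=\tfrac12\Bigl(1+\frac{C-\ln C}{C\ln C-C}\Bigr)+o=\tfrac12+\frac1{2\ln C}+O\bigl((\ln C)^{-2}\bigr).
\]
Likewise,
\[
\mathsf B_p(C)=\frac{pC\ln(pC)-pC+O(1)}{C\ln C-C+1}=p+\frac{p\ln p}{\ln C}+O\bigl((\ln C)^{-2}\bigr).
\]
The lower bounds therefore match to first order, up to the stated $\ln\ln C/\ln^2C$ error. The hard step is the construction side: showing that for a general (non-tracial) conditional expectation the index-achieving pure state has $D(\psi\|E(\psi))=\ln C-O(\ln\ln C)$, rather than merely $D_{\max}=\ln C$. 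I would first try to handle this by decomposing $E(\psi)$ on its spectral projection at $1/C$ and controlling the remaining mass by a Markov-type estimate. That estimate is where the $\ln\ln C$ loss enters.
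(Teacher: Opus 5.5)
Your first half is sound and is essentially the paper's argument. With $\sigma=E(\rho)$ the index gives $\rho\le C\sigma$, and $\sigma$ is unchanged along $P_t^E$ and $\Phi_p^E$. The reverse and convexity ratios, whose extremal scalar quotients are monotone in $C$, then give $\alpha(P_t^E)\ge\mathsf A(C)$ and $\eta(\Phi_p^E)\le\mathsf B_p(C)$. Your Gr\"onwall step is the integrated form of $\alpha=\frac12\inf_\rho\mathcal I_E(\rho)/D(\rho\|E(\rho))$.

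The tightness half has a genuine gap, on three counts.

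\emph{No upper bound on $\alpha$.} A pure witness $\omega=|\psi\rangle\langle\psi|$ cannot give it. Since $E(\omega)\neq\omega$ has rank at least two, $D(E(\omega)\|\omega)=+\infty$. Gr\"onwall only ever yields lower bounds on $\alpha$, so nothing in your sketch produces $\alpha\le\mathsf U(C)$.

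\emph{Wrong first-order coefficient for $\eta$.} The pure state fails even in the best case, where $\psi$ is an eigenvector of $E(\omega)$ with eigenvalue $C^{-1}$:
\[
\frac{D(\Phi_p^E(\omega)\|E(\omega))}{D(\omega\|E(\omega))}
=\frac{D_2\bigl(p+(1-p)C^{-1}\,\big\|\,C^{-1}\bigr)}{\ln C}
=p+\frac{p\ln p+(1-p)\ln(1-p)}{\ln C}+O\Bigl(\frac1C\Bigr).
\]
The complementary block contributes $(1-p)(1-C^{-1})\ln(1-p)$, which you dropped. It is exactly of the order you are trying to match.

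\emph{The ``hard step'' is false as posed.} Sharpness of $\psi\le CE(\psi)$ does not force $D(\psi\|E(\psi))\approx\ln C$. Take subsystem reset to a maximally mixed $\tau_B$ with $d_A\ge d_B=n$. Every $\psi=\sum_j\sqrt{\lambda_j}\,u_j\otimes v_j$ of full Schmidt rank attains $C=n^2$. Yet $D(\psi\|E(\psi))=-\sum_j\lambda_j\ln\lambda_j+\ln n$, which is arbitrarily close to $\frac12\ln C$ when $\lambda$ concentrates.

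What is missing is the paper's construction. Take the specific witness of Proposition~\ref{prop:gd-commuting-index-witness}, with Schmidt weights $(Cs_{i,j})^{-1}$, which satisfies $E(\omega)\psi=C^{-1}\psi$ exactly. Then test on the whole segment $\rho_a=(1-a)E(\omega)+a\omega$, rather than on $\omega$ alone. One has $E(\rho_a)=E(\omega)$ and $\Phi_p^E(\rho_a)=\rho_{pa}$. The commuting two-block structure (Lemma~\ref{lem:gd-general-binary-reduction}) turns both ratios exactly into binary ones in $x_a=a+(1-a)C^{-1}\in(0,1)$. This is precisely how the variational quantities $\mathsf U(C)$ and $\mathsf L_p(C)$ in the statement arise.

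The tight first-order term comes from choosing $x\approx(1+\ln C)^{-1}$, far from the pure state $x=1$. There the complementary block is negligible, $D_2(x\|C^{-1})\approx 1$ and $D_2(C^{-1}\|x)\approx(\ln C)^{-1}$. Moreover
\[
\frac{D_2(px+(1-p)C^{-1}\|C^{-1})}{D_2(x\|C^{-1})}=p+\frac{p\ln p}{\ln C}+O\bigl(\ln\ln C/(\ln C)^2\bigr).
\]
The $\ln\ln C$ loss is the cost $\ln(1/x)$ of this choice, not the outcome of a Markov-type spectral estimate.
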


Here
the matching first order estimate are given by the optimization expressions 
that for $C>1$ and $0<p<1$, set
\begin{equation}
\begin{aligned}
 \mathsf U(C)
 &:=\frac12\left(1+
 \inf_{\substack{0<x<1\\x\ne C^{-1}}}
 \frac{D_2(C^{-1}\|x)}{D_2(x\|C^{-1})}\right),
 \nonumber\\
 \mathsf L_p(C)
 &:=\sup_{\substack{0<x<1\\x\ne C^{-1}}}
 \frac{D_2(px+(1-p)C^{-1}\|C^{-1})}
      {D_2(x\|C^{-1})}.
\end{aligned}
\label{eq:gd-binary-bound-functions}
\end{equation}
where $D_2$ is the binary relative entropy 
\[
 D_2(u\|v):=u\ln\frac uv+(1-u)\ln\frac{1-u}{1-v}\ ,\  0<u,v<1.
\]
The reverse ratio yields the MLSI lower bounds, while the convexity ratio yields the
entropy contraction upper bounds.  In continuous time,
\(\mathsf B_{e^{-t}}(C(E))\leq e^{-2\mathsf A(C(E))t}\) holds for every
\(t\ge0\), so the single channel contraction is stronger than the semigroup
exponential decay.  Conversely, under the parametrization \(p=e^{-t}\), the
infinitesimal decay rate of the second bound at \(t=0\) yields the MLSI lower bound obtained from the first estimate.
For a fixed-state depolarizing channel with
\(E(\rho)=\tr(\rho)\sigma\), these estimates are closely related to the sharp
depolarizing analysis of M\"uller-Hermes, Stilck Fran\c{c}a, and
Wolf~\cite{MullerHermesFrancaWolf2016}.  For general conditional
expectations, our estimates improve the curvature-based MLSI lower bound
\(\frac12+\frac{1}{C(E)+1}\) from~\cite{MunchWirthZhang2024} to
\(\frac{1}{2}+O(\frac{1}{\ln C(E)})\), which is asymptotically tight as shown above.

An advantage of our results is that they also yields complete entropy contraction estimate.  For classical
dynamics, the entropy contraction constants always tensorize
~\cite{DiaconisSaloffCoste1996,Raginsky2016}:
\[
 \alpha(P_t\otimes Q_t)
 =\min\{\alpha(P_t), \alpha(Q_t)\},
 \qquad
 \eta(\Phi_p\otimes\Psi_p)
 =\max\{\eta(\Phi_p), \eta(\Psi_p)\}.
\]
In the quantum setting, however, ordinary entropy contraction
rates need not be stable under tensoring with an ancilla.  Indeed,
ordinary quantum MLSI constants can decrease after adjoining a passive
ancilla, a phenomenon already observed for qubit depolarization
\cite[Proposition~4.21]{BrannanGaoJunge2022} (  See also
\cite[Problem~6.4]{GaoJungeLaRacuenteLi2025}) 
\[
 \frac12
 \le \alpha( P_t^{E_2}\otimes \id_{M_2})<0.965
 <1=\alpha(P_t^{E_2}).
\]
This phenomenon motivates the use of complete constants,
\begin{equation}
\alpha_{\mathrm c}(P_t^E)
\!:=\inf_{k\ge1}\alpha(P_t^E\otimes\id_{M_k}),
\qquad
\eta_{\mathrm c}(\Phi_p^E)
\!:=\sup_{k\ge1}\eta(\Phi_p^E\otimes\id_{M_k}).
\label{eq:gd-complete-constants}.
\end{equation}
These constant control
entropy contraction uniformly with arbitrary finite-dimensional
amplifications \cite{gao2020fisher,GaoJungeLaRacuenteLi2025}.

The tight estimate of the complete constants now follows immediately from 
applying Theorem \ref{cor:intro-entropy-contraction-bounds} to the complete index
\[
 C_{\mathrm{cb}}(E):=\sup\nolimits_{k\ge1}
 C(E\otimes\id_{M_k})\,,
\]
\begin{corollary}For a faithful conditional expectation $E$, 
\begin{align}
 \alpha(P_t^E)
  &=\frac12+\frac{1}{2\ln C_{cb}(E)}
    +O\!\left(\frac{\ln\ln C_{cb}(E)}{(\ln C_{cb}(E))^2}\right),\\
  \eta(\Phi_p^E)
  &=p+\frac{p\ln p}{\ln C_{cb}(E)}
    +O\!\left(\frac{\ln\ln C_{cb}(E)}{(\ln C_{cb}(E))^2}\right).
\end{align}\label{cor:cb}
\end{corollary}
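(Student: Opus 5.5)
The plan is to reduce the corollary to Theorem~\ref{cor:intro-entropy-contraction-bounds}, applied to each amplification $E\otimes\id_{M_k}$. (I read the left-hand sides as the complete constants $\alpha_{\mathrm c}(P_t^E)$ and $\eta_{\mathrm c}(\Phi_p^E)$ of \eqref{eq:gd-complete-constants}, since those are what $C_{\mathrm{cb}}(E)$ governs.)

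\emph{Amplification is again a generalized depolarization.} For each $k$, the map $E_k:=E\otimes\id_{M_k}$ is again a faithful conditional expectation: it is idempotent, and it has the faithful invariant state $\omega\otimes\tau_k$, where $\omega$ is a faithful invariant state for $E$ and $\tau_k$ is the normalized trace. Moreover
\[
P_t^E\otimes\id_{M_k}=P_t^{E_k},\qquad \Phi_p^E\otimes\id_{M_k}=\Phi_p^{E_k}.
\]
Theorem~\ref{cor:intro-entropy-contraction-bounds} therefore applies to $E_k$ with index $C_k:=C(E_k)$ and gives
\[
\mathsf A(C_k)\le\alpha(P_t^{E_k})\le\mathsf U(C_k),\qquad \mathsf L_p(C_k)\le\eta(\Phi_p^{E_k})\le\mathsf B_p(C_k).
\]

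\emph{Monotonicity in the index.} The sequence $C_k$ is nondecreasing in $k$, because $M_k$ embeds into $M_{k+1}$ as a corner. Its supremum is $C_{\mathrm{cb}}(E)$, which is finite in finite dimensions; indeed $C_{\mathrm{cb}}\le(\dim)^2$, or one can use the finite-index theory of Pimsner--Popa. The key observation is that all four bound functions are monotone in $C$ in the right direction:
\begin{itemize}
\item $\mathsf A$ and $\mathsf U$ are nonincreasing;
\item $\mathsf L_p$ and $\mathsf B_p$ are nondecreasing.
\end{itemize}
For $\mathsf A$ this should follow from a direct derivative computation on $g(C)=(C-1-\ln C)/(C\ln C-C+1)$. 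For $\mathsf B_p$, I would differentiate $\ln\phi(1+p(C-1))-\ln\phi(C)$ and use convexity of $\phi$.

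Two of the bounds then pass to the infimum/supremum over $k$ immediately:
\[
\alpha_{\mathrm c}=\inf_k\alpha(P_t^{E_k})\ge\inf_k\mathsf A(C_k)\ge\mathsf A(C_{\mathrm{cb}}),\qquad \eta_{\mathrm c}\le\mathsf B_p(C_{\mathrm{cb}}).
\]
The other two bounds need more than monotonicity; it is cleaner to attain the supremum. Since $C(E_k)$ is nondecreasing and bounded, and the index is computed by a single pure state, there is some $k_0$ with $C(E_{k_0})=C_{\mathrm{cb}}(E)$; in finite dimension $k_0=\dim$ suffices by the standard stabilization of cb quantities. Then
\[
\alpha_{\mathrm c}\le\alpha(P_t^{E_{k_0}})\le\mathsf U(C_{\mathrm{cb}}),\qquad \eta_{\mathrm c}\ge\mathsf L_p(C_{\mathrm{cb}}).
\]
If exact attainment is unavailable, I would instead use continuity of $\mathsf U$ and $\mathsf L_p$ in $C$ together with $C_k\to C_{\mathrm{cb}}$.

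\emph{Asymptotics and the main obstacle.} Once $C(E)$ is replaced by $C_{\mathrm{cb}}(E)$ in both two-sided bounds, the asymptotic expansions are purely statements about the four scalar functions. They are identical to those already established in Theorem~\ref{cor:intro-entropy-contraction-bounds}, since those expansions depend only on the value of the index. I expect the main obstacle to be the two monotonicity/attainment steps for the upper bound $\mathsf U$ and the lower bound $\mathsf L_p$. These are defined by optimizations over $x$, so monotonicity in $C$ is not obvious. I would first try to avoid monotonicity entirely via the attainment argument: realize $C_{\mathrm{cb}}$ by an index-achieving pure state in $M_d\otimes M_{k}$ with $k$ equal to the dimension of the ambient algebra. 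That is exactly the "index achieving pure state" construction used for the non-complete bound, applied to $E_k$.
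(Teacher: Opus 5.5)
Your proposal is correct and follows the paper's own route. You read the left-hand sides as the complete constants $\alpha_{\mathrm c}$, $\eta_{\mathrm c}$, which is the intended reading. You then apply the two-sided bounds of Theorem~\ref{cor:intro-entropy-contraction-bounds} to each amplification $E\otimes\id_{M_k}$. For one side you use $C(E\otimes\id_{M_k})\le C_{\mathrm{cb}}(E)$ together with the monotonicity of $\mathsf A$ and $\mathsf B_p$; this is exactly the monotonicity of $r$ and $R_p$ established in the proof of Proposition~\ref{thm:relative-entropy-comparisons}. For the other side you use stabilization $C(E\otimes\id_{M_k})=C_{\mathrm{cb}}(E)$ with a pure-state witness for large $k$ (Proposition~\ref{prop:gd-commuting-index-witness}).

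One side remark is false but not load-bearing: $C_{\mathrm{cb}}(E)\le(\dim\mathcal H)^2$ fails for non-tracial $E$. For example, $C_{\mathrm{cb}}(E_\sigma)=\tr(\sigma^{-1})$ is unbounded as $\lambda_{\min}(\sigma)\to0$. Finiteness still follows from the index formula $C_{\mathrm{cb}}(E)=\sum_i\tr(\tau_i^{-1})$, or from your Choi-type stabilization $C_{\mathrm{cb}}(E)=C(E\otimes\id_{M_{\dim\mathcal H}})$.
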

Corollary~\ref{cor:cb} immediately gives the
system size independent tensorization property
\[
 \alpha_{\mathrm c}((P_t^E)^{\otimes n})
 =\alpha_{\mathrm c}(P_t^E),
 \qquad
 \eta_{\mathrm c}((\Phi_p^E)^{\otimes n})
 =\eta_{\mathrm c}(\Phi_p^E),
\]
Such a tensor stable property for the ordinary constant $\alpha$ and $\eta$ was only obtained for qubit depolarizing \cite{MullerHermesFrancaWolf2016} and is also studied under certain conditions in the manuscript in preparation \cite{HircheGeorgeNuradhaWilde2026}. 
The complete constants always tensorize stable under arbitrary
ancillary systems and suitable for quantum information tasks involving side
information.  Further examples include dephasing, where
\(D(\rho\|E(\rho))\) measures coherence
~\cite{BaumgratzCramerPlenio2014,StreltsovAdessoPlenio2017}, and symmetry
twirling, where the relative entropy measures asymmetry
~\cite{BartlettRudolphSpekkens2007}.

The rest of the paper is organized as follows.
Section~\ref{sec:ratio} provides the reverse and convexity ratios bounds and discusses their optimality.  Section~\ref{sec:gd}
is devoted to the entropy contraction of
generalized depolarization and tensorization property.  Examples are discussed in
Section~\ref{sec:examples}.  \\

\noindent {\bf Acknowledgement.} 
LG and LZ are partially supported by the National Natural Science Foundation
of China (grant no.~12401163) and the Department of Science and Technology
of Hubei Province (project nos.~2025EHA041 and 2025AFA044).
The authors acknowledge the use of AI tools during the exploratory stage of
this project.  All mathematical arguments and proofs in the final manuscript
were checked and written by the authors.

\section{Relative entropy Ratios}
\label{sec:ratio}
\subsection{Preliminaries on relative entropy}
\label{subsec:f-divergences}

For two quantum states with density operators \(\rho\) and \(\sigma\), the Umegaki relative entropy of $\rho$ with respect to $\sigma$ is
\[
 D(\rho\|\sigma):=
  \tr\!\bigl(\rho(\ln\rho-\ln\sigma)\bigr),
\]
provided $\operatorname{supp}(\rho)\subseteq\operatorname{supp}(\sigma)$, and $+\infty$ otherwise. 
The Umegaki relative entropy has wide applications in quantum information and also generalization as various quantum $f$-divergence \cite{HircheTomamichel2024}. \\

{\bf \noindent Hockey-Stick quantum $f$-divergence.} 
For our purpose, we recall the recently introduced Hockey-Stick quantum $f$-divergence and its basic properties \cite{BeigiHircheTomamichel2025}.
We will use the Riemann--Stieltjes representation by
Liu, Hirche, and Cheng
\cite[Propositions~6.1--6.2 and Lemma~6.3]{LiuHircheCheng2025}.  Throughout,
two self-adjoint operator \(X,Y\), we denote
\[
 \{X\le Y\}:=\mathds{1}_{[0,\infty)}(Y-X)
\]
by the spectral projection of \(Y-X\) onto its positive part. The symbol
\(\mathds{1}\) denotes the identity operator.
For faithful states $\rho$ and $\sigma$, define two distribution functions on $[0,\infty)$
\begin{align}
 P_{\rho,\sigma}(t)
 :=\tr\!\bigl(\rho\{\rho\le t\sigma\}\bigr), \qquad
 Q_{\rho,\sigma}(t)
:=\tr\!\bigl(\sigma\{\rho\le t\sigma\}\bigr), \qquad t\ge 0
 \label{eq:RS-distributions}
 \end{align}
 which satisfies
\begin{equation}
 dP_{\rho,\sigma}(t)
 =t\,dQ_{\rho,\sigma}(t).
\end{equation}
Then,  for a convex function \(f:[0,\infty)\to\mathbb R\) satisfying \(f(1)=0\), Hockey-Stick quantum $f$-divergence can be represented as 
\begin{equation}
 D_f(\rho\|\sigma)
 =\int_0^\infty f(t)\,
   dQ_{\rho,\sigma}(t).
 \label{eq:f-divergence-RS}
\end{equation}
This, via integration by part, is another quantum analog of classical $f$-divergence for two probability measures \(\mathsf P\ll\mathsf Q\), defined as
\[
 D_f(\mathsf P\|\mathsf Q)
 :=\int f\!\left(\frac{d\mathsf P}{d\mathsf Q}\right)\,d\mathsf Q.
\]
For \(f\) twice continuously differentiable on \((0,\infty)\),
\eqref{eq:f-divergence-RS} agrees with the original Hockey-Stick integral formulation
of~\cite{HircheTomamichel2024}; see
\cite[Theorem~3.1 and Proposition~6.1]{LiuHircheCheng2025}.

The Umegaki relative entropy corresponds to the function $f(t)=t\ln t$. Note that adding a multiple of \(t-1\) to the function does not change \(D_f\) on states, as
\[ \int_0^\infty (t-1)\,
   dQ_{\rho,\sigma}(t)=\int_0^\infty \,
   dP_{\rho,\sigma}(t)-\int_0^\infty \,
   dQ_{\rho,\sigma}(t)=\tr(\rho)-\tr(\sigma)=0. \]
Therefore, we have
\begin{equation}
 D(\rho\|\sigma)= D_\phi(\rho\|\sigma)
 =\int_0^\infty \phi(t)\,
   dQ_{\rho,\sigma}(t).
 \label{eq:relative-entropy-RS}
\end{equation}
via the normalized function 
\begin{equation}
 \phi(t):=t\ln t-t+1,
 \qquad t>0,
 \label{eq:relative-generator}
\end{equation}
which is $f$ subtracting the first order term $f'(1)(t-1)$ at $t=1$.

One advantage of the Riemann--Stieltjes representation is that interchanging
the two states is described by the transposed function.
\begin{equation}
 D_f(\rho\|\sigma)=D_{f^\diamond}(\sigma\|\rho),\quad \text{where}\quad f^\diamond(t):=t f(t^{-1}).
 \label{eq:f-exchange}
\end{equation}
This is the \(\star\)-conjugation described in
\cite[Proposition~2.6(3)]{HircheTomamichel2024}; see also
\cite[proof of Corollary~3.8]{BeigiHircheTomamichel2025}.  In particular,
\(\phi^\diamond(t)=-\ln t-1+t\), and hence
\begin{equation}
 D(\sigma\|\rho)=D_{\phi^\diamond}(\rho\|\sigma).
 \label{eq:reverse-generator}
\end{equation}
 
We also record the two convexity functions. Let \(\rho,\sigma\) be faithful states, let \(0<p<1\), and define the convex combination
\[
 \rho_p=p\rho+(1-p)\sigma.
\]
The change-of-reference rule for the hockey-stick \(f\)-divergence gives (see \cite[Proposition~4.6]{HircheTomamichel2024} and
\cite[Eq.~(5.22)]{BeigiHircheTomamichel2025})
\[
 D_f(p\rho+(1-p)\sigma\|\sigma)=D_{f\circ a_p}(\rho\|\sigma);
\]
where 
\[a_p(t)=1+p(t-1)\]
In particular, for relative entropy, taking \(f=\phi\) yield
\begin{align}
D(p\rho+(1-p)\sigma\|\sigma)
 =D_{f_p}(\rho\|\sigma), \label{eq:output-generator}\\ 
D(\rho \|p\rho+(1-p)\sigma)
 =D_{g_p}(\rho\|\sigma), \label{eq:input-generator}
\end{align}
where 
\[f_p(t):=\phi(a_p(t)),  \]
and 
\[ g_p(t)
 :=a_p(t)\phi\!\left(\frac{t}{a_p(t)}\right)
 =t\ln\frac{t}{a_p(t)}-(1-p)(t-1). \]
Here, the second change-of-reference rule gives
\[
 D_f(\rho\|\rho_p)
 =D_{a_p(\,\cdot\,)f(\,\cdot\,/a_p(\,\cdot\,))}(\rho\|\sigma);
\]
see \cite[Proposition~2.9]{HircheTomamichel2024} and
\cite[Eq.~(5.3)]{BeigiHircheTomamichel2025}.  \\

{\bf \noindent Bogoliubov--Kubo--Mori metric.} We will also use the integral representation of relative entropy via Bogoliubov--Kubo--Mori (BKM) quantum information metric. For a positive definite operator \(\rho\), the Bogoliubov--Kubo--Mori (BKM)
metric for a operator $X$ is defined by
\begin{equation}
 \|X\|_{\rho^{-1}}^2
 :=\int_0^\infty
 \tr\!\bigl(X^*(\rho+r\mathds 1)^{-1}
 X(\rho+r\mathds 1)^{-1}\bigr)\,dr .
 \label{eq:bkm-metric}
\end{equation}

\begin{lemma}[BKM order comparison \cite{GaoRouze2022}]\label{lem:compare}
Let \(\rho,\sigma\) be two positive operators  with \(\rho\le c\,\sigma\) for some \(c>0\). Then for any operator \(X\),
\[
\|X\|_{\sigma^{-1}}^2
\le c\|X\|_{\rho^{-1}}^2 .
\]
\end{lemma}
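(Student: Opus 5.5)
The plan is to deduce the inequality from two structural properties of the map $\rho\mapsto\|X\|_{\rho^{-1}}^2$ defined in \eqref{eq:bkm-metric}: it is \emph{homogeneous of degree $-1$} and \emph{antitone} (order-reversing) in the positive definite argument. Granting these, $\rho\le c\sigma$ gives
\[
\|X\|_{\rho^{-1}}^2\ \ge\ \|X\|_{(c\sigma)^{-1}}^2\ =\ c^{-1}\|X\|_{\sigma^{-1}}^2 ,
\]
which is the claim after multiplying by $c$.

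\emph{Homogeneity.} For $\lambda>0$ we write $(\lambda\sigma+r\mathds 1)^{-1}=\lambda^{-1}(\sigma+(r/\lambda)\mathds 1)^{-1}$. The integrand in \eqref{eq:bkm-metric} then picks up a factor $\lambda^{-2}$. The substitution $r=\lambda s$ contributes a factor $\lambda$, so $\|X\|^2_{(\lambda\sigma)^{-1}}=\lambda^{-1}\|X\|^2_{\sigma^{-1}}$.

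\emph{Monotonicity.} Suppose $0<A\le B$. Since operator inversion is operator monotone decreasing, $A^{-1}\ge B^{-1}\ge 0$. We split the difference of integrands as
\[
\tr(X^*A^{-1}XA^{-1})-\tr(X^*B^{-1}XB^{-1})
=\tr\bigl(X^*(A^{-1}-B^{-1})X\,A^{-1}\bigr)+\tr\bigl(X^*B^{-1}X\,(A^{-1}-B^{-1})\bigr).
\]
Each term is the trace of a product of two positive semidefinite operators, namely $X^*(A^{-1}-B^{-1})X$ with $A^{-1}$, and $X^*B^{-1}X$ with $A^{-1}-B^{-1}$. Hence each term is nonnegative. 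Apply this with $A=\rho+r\mathds 1$ and $B=c\sigma+r\mathds 1$ for every $r>0$; the hypothesis $\rho\le c\sigma$ gives $A\le B$. Integrating over $r\in(0,\infty)$ yields $\|X\|^2_{\rho^{-1}}\ge\|X\|^2_{(c\sigma)^{-1}}$.

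\emph{Non-invertible operators.} The only point needing care is when $\rho$ or $\sigma$ is not invertible, since \eqref{eq:bkm-metric} is stated for positive definite operators. If $\rho$ is singular, I would interpret $\|X\|^2_{\rho^{-1}}$ as the (possibly infinite) value of the integral. If it is infinite there is nothing to prove. Otherwise I would replace $\rho,\sigma$ by $\rho+\varepsilon\mathds 1$ and $\sigma+\varepsilon\mathds 1$; these still satisfy $\rho+\varepsilon\mathds 1\le \max(c,1)\,(\sigma+\varepsilon\mathds 1)$, and for $c\ge1$ this is the same constant. For the sharp constant $c$ it is cleaner to use $\rho+\varepsilon\mathds 1\le c(\sigma+(\varepsilon/c)\mathds 1)$. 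I would then pass to the limit $\varepsilon\to0$ by monotone convergence of the integrand in $r$. I expect this limiting step to be the only mildly delicate part; the core argument is the two-line trace positivity computation above.
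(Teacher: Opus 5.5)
Your proof is correct and is essentially the paper's argument: for each $r>0$ you compare the integrands using operator anti-monotonicity of inversion (your two-term trace splitting spells out what the paper attributes to cyclicity of the trace), and then the change of variables $r\mapsto r/c$ supplies the factor $c^{-1}$. Your $\varepsilon$-perturbation for singular $\rho,\sigma$ is valid but not needed, because $\rho+r\mathds 1$ and $c\sigma+r\mathds 1$ are already invertible for every $r>0$, so the pointwise inequality can be integrated directly with values in $[0,\infty]$.
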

\begin{proof}
Using the cyclicity of the trace and the operator anti-monotonicity of \(t\mapsto t^{-1}\),
\begin{align*}
\int_{0}^\infty \tr\bigl(X^*(\rho+r\mathds 1)^{-1}X(\rho+r\mathds 1)^{-1}\bigr)\,dr  \ge &  \int_{0}^\infty \tr\bigl(X^*(c\sigma+r\mathds 1)^{-1}X(c\sigma+r\mathds 1)^{-1}\bigr)\,dr \\
= &\frac1c\int_{0}^\infty \tr\bigl(X^*(\sigma+r\mathds 1)^{-1}X(\sigma+r\mathds 1)^{-1}\bigr)\,dr .
\end{align*}
The last equality follows from the change of variables \(r\mapsto r/c\).
\end{proof}

We also recall Chebyshev's integral inequality.
\begin{lemma}[Chebyshev's integral inequality]
\label{lem:chebyshev}
Let \(\mu\) be a finite nonzero positive measure on \([0,1]\), and let \(f,g:[0,1]\to \mathbb{R}\) be integrable functions with the same monotonicity. Then
\[
 \mu([0,1])\int_0^1fg\,d\mu
 \ge
 \left(\int_0^1f\,d\mu\right)
 \left(\int_0^1g\,d\mu\right).
\]
The inequality is reversed when \(f\) and \(g\) have opposite monotonicity.
\end{lemma}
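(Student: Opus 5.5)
The plan is to prove Lemma~\ref{lem:chebyshev} by the standard symmetrization argument: rewrite the difference of the two sides as a double integral against the product measure \(\mu\otimes\mu\), and show that its integrand has a fixed sign.

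The first step is to consider
\[
 I:=\int_0^1\!\!\int_0^1 \bigl(f(x)-f(y)\bigr)\bigl(g(x)-g(y)\bigr)\,d\mu(x)\,d\mu(y).
\]
Expanding the product gives four terms. The two diagonal terms \(\int\!\!\int f(x)g(x)\,d\mu(x)\,d\mu(y)\) and \(\int\!\!\int f(y)g(y)\,d\mu(x)\,d\mu(y)\) each equal \(\mu([0,1])\int_0^1 fg\,d\mu\). The two cross terms each equal \(\bigl(\int_0^1 f\,d\mu\bigr)\bigl(\int_0^1 g\,d\mu\bigr)\). Hence
\[
 I=2\Bigl(\mu([0,1])\int_0^1 fg\,d\mu-\int_0^1 f\,d\mu\int_0^1 g\,d\mu\Bigr).
\]

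The second step is to observe the sign of the integrand. If \(f\) and \(g\) are both nondecreasing, or both nonincreasing, then for every pair \(x,y\) the differences \(f(x)-f(y)\) and \(g(x)-g(y)\) have the same sign (or one of them vanishes). So the integrand is pointwise nonnegative, \(I\ge0\), and the claimed inequality follows. If \(f\) and \(g\) have opposite monotonicity, the integrand is pointwise nonpositive, \(I\le 0\), and we obtain the reversed inequality.

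The only point that needs care is justifying the expansion, that is, ensuring each of the four terms is finite so that Fubini applies. For this I would use integrability hypotheses strong enough that \(f\), \(g\) and \(fg\) all lie in \(L^1(\mu)\). In the paper's applications the functions are monotone on the compact interval \([0,1]\), hence bounded, and \(\mu\) is finite, so this is automatic. I would add a remark that boundedness (or \(f,g\in L^2(\mu)\)) suffices. I do not expect any genuine obstacle beyond this measure-theoretic bookkeeping.
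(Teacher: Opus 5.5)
Your proposal is correct and follows the paper's proof exactly: both expand $\iint (f(s)-f(t))(g(s)-g(t))\,d\mu(s)\,d\mu(t)$ as twice the difference of the two sides and read off the sign from the monotonicity assumptions. Your added remark that monotone functions on $[0,1]$ are bounded, so Fubini applies automatically for a finite $\mu$, is a welcome detail the paper leaves implicit.
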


\begin{proof}
Writing \(m=\mu([0,1])\), one has
\[
 m\int fg\,d\mu-\int f\,d\mu\int g\,d\mu
 =
 \frac12\iint
 (f(s)-f(t))(g(s)-g(t))\,d\mu(s)d\mu(t).
\]
The sign follows immediately from the assumed monotonicities.
\end{proof}

We first assume that \(\rho\) and \(\sigma\) have common support. By restricting them
to that support, we can assume $\rho$ and $\sigma$ are positive definite.  Recall that we write \(\rho_p=p\rho+(1-p)\sigma\) for $p\in[0,1]$.  Differentiating relative entropy
along this  linear segment and integrating twice gives (see \cite{GaoRouze2022} for the detail)
\begin{equation}
 \begin{aligned}
 D(\rho\|\sigma)
 &=\int_0^1(1-s)\|\rho-\sigma\|_{\rho_s^{-1}}^2\,ds,\\
 D(\sigma\|\rho)
 &=\int_0^1s\|\rho-\sigma\|_{\rho_s^{-1}}^2\,ds,\\
 D(p\rho+(1-p)\sigma\|\sigma)
 &=\int_0^p(p-s)\|\rho-\sigma\|_{\rho_s^{-1}}^2\,ds,\\
 D(\rho\|p\rho+(1-p)\sigma)
 &=\int_p^1(1-s)\|\rho-\sigma\|_{\rho_s^{-1}}^2\,ds.
 \end{aligned}
 \label{eq:repD}
\end{equation}

\subsection{Reverse and convexity ratios}
\label{subsec:key-comparison}
For two density operators \(\rho\) and \(\sigma\), we define
\[
 C:=\inf\{c>0:\rho\le c\sigma\}<\infty.
\]
Clearly, \(C\ge1\), with equality if and only if \(\rho=\sigma\).  This is closely related to the max-relative
entropy introduced by Datta ~\cite{datta2009min},
\[
 D_{\max}(\rho\|\sigma)
 :=\ln\inf\{c>0:\rho\le c\sigma\}
 =\ln C.
\]
The next theorem gives universal bounds for the reverse ratio and convexity ratios of relative entropy depending
only on this constant.

\begin{proposition}[Reverse and convexity ratios]
\label{thm:relative-entropy-comparisons}
Let \(\rho\ne\sigma\) be density operators with minimal order constant
\[
 C:=\inf\{c>0:\rho\le c\sigma\}>1.
\]
 Then
\begin{enumerate}
\item[i)] 
\begin{equation}
 \frac{D(\sigma\|\rho)}{D(\rho\|\sigma)}
 \ge
 \frac{C-1-\ln C}{C\ln C-C+1}.
 \label{eq:reverse-comparison}
\end{equation}
\item[ii)] for \(0<p<1\),
\begin{align}
 \phi(1-p)
 \le
 \frac{D(p\rho+(1-p)\sigma\|\sigma)}{D(\rho\|\sigma)}
 \le
 \frac{\phi(a_p(C))}{\phi(C)},
 \label{eq:output-convexity-comparison}\\
 \frac{a_p(C)\phi(C/a_p(C))}{\phi(C)}
 \le
 \frac{D(\rho\|p\rho+(1-p)\sigma)}{D(\rho\|\sigma)}
 \le 1-p .
 \label{eq:input-convexity-comparison}
\end{align}
where
\[
 \phi(t)=t\ln t-t+1,
 \qquad
 a_p(t)=1+p(t-1).
\]
\end{enumerate}
\end{proposition}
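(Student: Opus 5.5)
The plan is to reduce every inequality to a pointwise comparison of generating functions, using the Riemann--Stieltjes representation \eqref{eq:f-divergence-RS}. By \eqref{eq:relative-entropy-RS}, \eqref{eq:reverse-generator}, \eqref{eq:output-generator} and \eqref{eq:input-generator}, the four quantities $D(\rho\|\sigma)$, $D(\sigma\|\rho)$, $D(\rho_p\|\sigma)$ and $D(\rho\|\rho_p)$ are integrals of $\phi$, $\phi^\diamond$, $f_p$ and $g_p$, respectively, against the same positive measure $dQ_{\rho,\sigma}$. (We first restrict to the common support, as in the text, so that the states are faithful.)

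The key observation is about the support of this measure. Since $\rho\le C\sigma$, the projection $\{\rho\le t\sigma\}$ equals $\mathds 1$ for all $t\ge C$. Hence $Q_{\rho,\sigma}$ is constant on $[C,\infty)$, and $dQ_{\rho,\sigma}$ is supported on $[0,C]$. All of $\phi,\phi^\diamond,f_p,g_p$ are nonnegative and vanish only at $t=1$, where all four vanish to second order. So for $h\in\{\phi^\diamond,f_p,g_p\}$,
\[
\inf_{t\in[0,C]}\frac{h(t)}{\phi(t)}\le \frac{\int h\,dQ}{\int\phi\,dQ}\le \sup_{t\in[0,C]}\frac{h(t)}{\phi(t)},
\]
with the ratio at $t=1$ defined by continuity as $h''(1)/\phi''(1)$. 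Atoms of the measure at $t=1$ contribute zero to both integrals, and at $t=0$ we use $\phi(0)=1$. It then suffices to show that each ratio $h/\phi$ is monotone on $(0,\infty)$, so that the extrema over $[0,C]$ occur at the endpoints $t=0$ and $t=C$.

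To prove monotonicity I would use a monotone L'H\^opital argument around the base point $1$. Since $h(1)=\phi(1)=h'(1)=\phi'(1)=0$, monotonicity of $h''/\phi''$ implies monotonicity of $h'/\phi'$ with the same direction, separately on $(0,1)$ and on $(1,\infty)$. Applying the rule once more gives the same monotonicity for $h/\phi$. Continuity at $t=1$, where all three ratios equal $h''(1)/\phi''(1)$, glues the two sides together. The second derivatives are explicit: $\phi''=1/t$, $(\phi^\diamond)''=1/t^2$, $f_p''=p^2/a_p(t)$, and a short computation gives $g_p''=(1-p)^2/(t\,a_p(t)^2)$. The resulting ratios are:
\begin{itemize}
\item $(\phi^\diamond)''/\phi''=1/t$, which is decreasing. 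So $\phi^\diamond/\phi$ is decreasing and its infimum on $[0,C]$ is $\phi^\diamond(C)/\phi(C)$, which is exactly \eqref{eq:reverse-comparison}.
\item $f_p''/\phi''=p^2t/a_p(t)$, which is increasing. So $f_p/\phi$ lies between its value $f_p(0)/\phi(0)=\phi(1-p)$ at $t=0$ and $\phi(a_p(C))/\phi(C)$ at $t=C$, giving \eqref{eq:output-convexity-comparison}.
\item $g_p''/\phi''=(1-p)^2/a_p(t)^2$, which is decreasing. So $g_p/\phi$ lies between $g_p(C)/\phi(C)=a_p(C)\phi(C/a_p(C))/\phi(C)$ and $g_p(0)=1-p$, giving \eqref{eq:input-convexity-comparison}.
\end{itemize}

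The main obstacle I expect is the careful justification of the monotone L'H\^opital step across the sign change of $\phi'(t)=\ln t$ at $t=1$. One has to run the argument on each side of $1$ with base point $1$, and check that the direction of monotonicity is preserved on both sides. This works because $\phi'<0$ on $(0,1)$ and $\phi'>0$ on $(1,\infty)$. One also has to handle the endpoint $t=0$, where $f_p$ and $g_p$ must be interpreted by continuity.

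As a cross-check, or an alternative route, I would verify i) with the BKM representation \eqref{eq:repD} and Lemma~\ref{lem:compare}. The bound $\rho_s\le a_s(C)\sigma$ controls how $\|\rho-\sigma\|^2_{\rho_s^{-1}}$ varies in $s$, and Chebyshev's inequality (Lemma~\ref{lem:chebyshev}) then compares the weights $s$ and $1-s$. The $f$-divergence route above is the primary plan.
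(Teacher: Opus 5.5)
Your proposal is correct and follows essentially the paper's first proof. Both arguments use the Riemann--Stieltjes representation with $dQ_{\rho,\sigma}$ supported on $[0,C]$, reduce each ratio to the extrema of $h/\phi$ over $[0,C]$, and get monotonicity from the monotone L'H\^opital rule applied twice with the same second-derivative quotients $p^2t/a_p(t)$ and $(1-p)^2/a_p(t)^2$. The only cosmetic difference is in part i): you use $(\phi^\diamond)''/\phi''=1/t$, while the paper differentiates $r(t)$ directly and invokes $\sqrt t\,|\ln t|\le|t-1|$.
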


We present two independent proofs for the ratio bounds above. The first one follows from the comparison of Hockey-Stick $f$-divergence from \cite[Theorem~4.2]{BeigiHircheTomamichel2025}. We also provide a second proof using the BKM integral representation~\eqref{eq:repD} following the idea of \cite{GaoRouze2022}. For both approach, we can assume the density operators $\rho$ and $\sigma$ has common support, hence can be viewed as positive definite on their support. The general case follows by faithful perturbation
\(\rho_\varepsilon=(1-\varepsilon)\rho+\varepsilon\sigma\) and then taking
\(\varepsilon\downarrow0\), where the corresponding order constant is
\(C_\varepsilon=(1-\varepsilon)C+\varepsilon\) converging to \(C\).

\begin{proof}[Proof via HS $f$-divergence]
This theorem follows the idea of
\cite[Theorem~4.2]{BeigiHircheTomamichel2025}.
By the discussion above, it suffices to work on $\rho$ and $\sigma$ with common
support. Set
\[
 r(t):=\frac{\phi^\diamond(t)}{\phi(t)},
 \qquad
 R_p(t):=\frac{f_p(t)}{\phi(t)},
 \qquad
 S_p(t):=\frac{g_p(t)}{\phi(t)},
\]
for \(t\ne1\), and
 \(r(1):=1\), \(R_p(1):=p^2\), and \(S_p(1):=(1-p)^2\) by continuity. 
The Riemann--Stieltjes integral representation reduce all
three entropy ratios to extrema of the quotients of the corresponding scalar functions:
\begin{align}
 \frac{D(\sigma\|\rho)}{D(\rho\|\sigma)}
 =&\frac{\int_{(0,C]}\phi^\diamond(t)\,dQ_{\rho,\sigma}(t)}{\int_{(0,C]}\phi(t)\,dQ_{\rho,\sigma}(t)}\geq \inf_{t\in(0,C]}r(t)
 \label{eq:reverse-generator-ratio}\\
 \sup_{t\in(0,C]}R_p(t)\geq \frac{D(\rho_p\|\sigma)}{D(\rho\|\sigma)}
 =&\frac{\int_{(0,C]}f_p(t)\,dQ_{\rho,\sigma}(t)}{\int_{(0,C]}\phi(t)\,dQ_{\rho,\sigma}(t)}\geq \inf_{t\in(0,C]}R_p(t),
 \label{eq:output-generator-ratio}\\
\sup_{t\in(0,C]}S_p(t)\geq \frac{D(\rho\|\rho_p)}{D(\rho\|\sigma)}
 =&\frac{\int_{(0,C]}g_p(t)\,dQ_{\rho,\sigma}(t)}{\int_{(0,C]}\phi(t)\,dQ_{\rho,\sigma}(t)}\geq \inf_{t\in(0,C]}S_p(t),
 \label{eq:input-generator-ratio}
\end{align}
Note that here $dQ_{\rho,\sigma}$ is supported on $[0,C]$ because of the order condition. 
For the reverse ratio, the scalar quotient is
\[
 r(t)=\frac{-\ln t-1+t}{t\ln t-t+1},
\]
For \(t\ne1\), direct
differentiation gives
\[
 r'(t)
 =\frac{t(\ln t)^2-(t-1)^2}{t\phi(t)^2}\le0,
\]
where the inequality follows from
\(\sqrt t\,|\ln t|\le|t-1|\).  Thus \(r\) is non-increasing on
\((0,\infty)\), and
\[
 \inf_{t\in(0,C]}r(t)=r(C)=\frac{C-1-\ln C}{C\ln C-C+1},
\]
Equation~\eqref{eq:reverse-generator-ratio} therefore proves
\eqref{eq:reverse-comparison}.

For the first convexity ratio, we show that 
\[R_p(t)=\frac{f_p(t)}{\phi(t)} \]
is monotone increasing over $[0,C]$, which can be argued on $[0,1)$ and $(1,C]$ separately.  Denote 
\[ H_p(t)=\frac{f_p''(t)}{\phi''(t)}
 =\frac{p^2t}{a_p(t)}\]
 as the quotient of second order derivatives of numerator and denominator respectively. Note that
\[ \left(\frac{p^2t}{a_p(t)}\right)'
 =\frac{p^2(1-p)}{a_p(t)^2}>0,
\]
 $H_p$ is strictly increasing on \([0,C]\). 
Note that \[ f_p(1)=\phi(1)= f_p'(1)=\phi'(1)=0.\]
Applying Monotone Form of L'H\^opital's Rule twice implies \(R_p\) is increasing on each of
the full intervals \((0,1)\) and \((1,C]\), which yields the convexity bound \eqref{eq:output-generator-ratio} by put $t=0$ and $t=C$.

The argument for the second convexity ratio is similar. Just note
\[ g_p(1)=\phi(1)= g_p'(1)=\phi'(1)=0, \]
and the quotient of second derivatives is monotone decreasing
\[ \frac{g_p''(t)}{\phi''(t)}
 =\frac{(1-p)^2}{a_p(t)^2},
 \qquad
 \left(\frac{(1-p)^2}{a_p(t)^2}\right)'
 =-\frac{2p(1-p)^2}{a_p(t)^3}<0.\]
Therefore, 
 the bounds in \eqref{eq:input-convexity-comparison} follows from 
putting $t=C$ and $t=0$ into $S_p$ for the minimum and maximum on $[0,C]$ respectively. 
\end{proof}
We now present the alternative proof by BKM metric
\begin{proof}[Proof via BKM metric]
\label{subsec:bkm-approach}Recall that $\rho_s=s\rho+(1-s)\sigma$. 
For \(s\in[0,1]\), define
\[
 h(s)
 =\bigl(1+(C-1)s\bigr)\|\rho-\sigma\|_{\rho_s^{-1}}^2.
\]
Note that $ h$ is monotone non-decreasing, because of Lemma~\ref{lem:compare} and  \(0\le s<t\le1\),
\[
 \bigl(1+(C-1)t\bigr)\rho_{s}
 -\bigl(1+(C-1)s\bigr)\rho_{t}
 =(t-s)(C\sigma-\rho)\ge0.
\]
Define
\[
 d\mu(s)=\frac{ds}{1+(C-1)s}.
\]
The first two identities in \eqref{eq:repD} become
\[
 D(\rho\|\sigma)=\int_0^1(1-s)h(s)\,d\mu(s),
 \qquad
 D(\sigma\|\rho)=\int_0^1s h(s)\,d\mu(s).
\]
Applying Chebyshev integral inequality Lemma~\ref{lem:chebyshev} to the pairs \((h,s)\) and \((h,1-s)\)
gives
\[
 \frac{D(\sigma\|\rho)}{D(\rho\|\sigma)}
 \ge
 \frac{\int_0^1s\,d\mu(s)}
      {\int_0^1(1-s)\,d\mu(s)}
 =
 \frac{C-1-\ln C}{C\ln C-C+1}.
\]
For convexity ratios,
set
\[
 r_p(s)=\frac{(p-s)_+}{1-s}, \qquad 0\le s<1
 \qquad \text{ and } r_p(1)=0.
\]
For the upper bound, define the measure
\[
 d\nu(s)=\frac{1-s}{1+(C-1)s}\,ds.
\]
As shown above, \(h\) is non-decreasing, while \(r_p\) is
non-increasing on \([0,1]\).  Hence by 
Lemma~\ref{lem:chebyshev} again and \eqref{eq:repD},
\[
 \frac{D(\rho_p\|\sigma)}{D(\rho\|\sigma)}
 =\frac{\int_0^1h(s)r_p(s)\,d\nu(s)}
        {\int_0^1h(s)\,d\nu(s)}
 \le
 \frac{\int_0^1r_p(s)\,d\nu(s)}{\nu([0,1])}.
\]
The integrals are
\[
 \nu([0,1])=\frac{\phi(C)}{(C-1)^2},
 \qquad
 \int_0^1r_p(s)\,d\nu(s)
 =\frac{\phi(a_p(C))}{(C-1)^2},
\]
which gives the upper bound in \eqref{eq:output-convexity-comparison}.

For the lower bound, set
\[  F(s)=(1-s)\|\rho-\sigma\|_{\rho_s^{-1}}^2 .\]
For \(0\le s<t<1\),
\[ 
 (1-s)\rho_t-(1-t)\rho_s
 =(t-s)\rho\ge0.
\]
By Lemma~\ref{lem:compare}, this implies
\[
 F(s)=(1-s)\|\rho-\sigma\|_{\rho_s^{-1}}^2\ge (1-t)\|\rho-\sigma\|_{\rho_t^{-1}}^2=F(t).\]  
Therefore, \(F\) is non-increasing. Since \(r_p\) is also non-increasing,
by Chebyshev integral inequality (Lemma~\ref{lem:chebyshev}) applied to Lebesgue measure, we have
\[
 \frac{D(\rho_p\|\sigma)}{D(\rho\|\sigma)}
 =\frac{\int_0^1r_p(s)F(s)\,ds}{\int_0^1F(s)\,ds}
 \ge\int_0^1r_p(s)\,ds
 =\phi(1-p),
\]
which completes \eqref{eq:output-convexity-comparison}.

For the second convexity ratios, 
 we first have 
\[
 \frac{D(\rho\|\rho_p)}{D(\rho\|\sigma)}
 =\frac{\int_0^1\mathbf{1}_{[p,1]}(s)F(s)\,ds}
        {\int_0^1F(s)\,ds}\le 1-p.
\]
Here we use, again,  the Chebyshev integral inequality for the non-decreasing indicator $\mathbf{1}_{[p,1]}$ and \(F\) being non-increasing.  On the other hand,
\(h\) and \(\mathbf{1}_{[p,1]}\) are both non-decreasing, and hence
\[
 \frac{D(\rho\|\rho_p)}{D(\rho\|\sigma)}
 =\frac{\int_0^1\mathbf{1}_{[p,1]}(s)h(s)\,d\nu(s)}
        {\int_0^1h(s)\,d\nu(s)}
 \ge
 \frac{\int_p^1d\nu(s)}{\nu([0,1])}.
\]
The calculation gives
\[
 \int_p^1d\nu(s)
 =\frac{a_p(C)\phi(C/a_p(C))}{(C-1)^2},
\]
which gives \eqref{eq:input-convexity-comparison} and completes the 
proof via BKM metric.
\end{proof}

\begin{remark}[Optimality and sharpness]{\rm 
In the classical setting, the optimal
comparison constants of $f$-divergence are obtained via  functional domination in
\cite[Theorem~6 and Remarks~11--12]{SasonVerdu2016}.  In particular, the
reverse comparison in Theorem~\ref{thm:relative-entropy-comparisons} is the
one-sided specialization of
\cite[Theorem~7]{SasonVerdu2016}. Given the assumption \(\rho\le C\sigma\), these bounds are known to be optimal already in classical cases, hence so are them for quantum states. 
For the Hockey-Stick quantum
\(f\)-divergence, the general quantum functional-domination theorem was obtained in
\cite[Theorem~4.2]{BeigiHircheTomamichel2025} under the common domain assumption. 
The first Hockey-Stick $f$-quantum divergence argument basically combined this result with extrema analysis for the quotients of the pairs
$
 (\phi^\diamond,\phi),
 (f_p,\phi)$, and $
 (g_p,\phi)$.}
\end{remark}

\section{Entropy contraction of generalized depolarization}
\label{sec:gd}
This section studies entropy contraction generated by a faithful conditional
expectation on a finite-dimensional quantum system. We use the map 
\[
E^*:\mathcal B(\mathcal H)\longrightarrow\mathcal N
\subseteq\mathcal B(\mathcal H)
\]
with star notation to denote the unital completely positive map in the Heisenberg picture and $E$ for
its Schr\"odinger picture channel on states.  A star therefore indicates a map
on observables, while the corresponding unstarred map acts on states.  We work
below mainly in the Schr\"odinger picture.  
The continuous and one-step depolarizing dynamics are
\[
P_t^E=e^{-t}\id+(1-e^{-t})E,
\qquad
\Phi_p^E=p\,\id+(1-p)E\,,
\]
The main purpose is to apply the three
ratio comparisons from the previous section to these conditional-expectation
dynamics.  The operator order relation
\[
  \rho\le C(E)E(\rho)
\]
provides precisely the order condition needed there, with the second state chosen
as $E(\rho)$.  The constant $C(E)$ is the least number such that
$\rho\le C(E)E(\rho)$ for every state $\rho$; its index-theoretic origin is
recalled below.

\subsection{Conditional expectations and index bounds}
\label{subsec:gd-index}

Let $\mathcal N\subseteq\mathcal B(\mathcal H)$ be a unital
$*$-subalgebra.  In the Heisenberg picture, a conditional expectation onto
$\mathcal N$ is a unital completely positive idempotent map onto $\mathcal N$
\[
 E^*:\mathcal B(\mathcal H)\longrightarrow\mathcal B(\mathcal H),
 \qquad (E^*)^2=E^*,
 \qquad \operatorname{Ran}E^*=\mathcal N.
\]
By \cite{Takesaki72}, such map satisfies the bimodule
property
\begin{equation}
 E^*(axb)=aE^*(x)b,
 \qquad a,b\in\mathcal N,\quad x\in\mathcal B(\mathcal H).
\label{eq:gd-bimodule}
\end{equation}
We call $E^*$ faithful if
\[
 E^*(x^*x)=0\quad\Longrightarrow\quad x=0.
\]
The Schr\"odinger picture adjoint $E$ is defined via duality
\begin{equation}
  \tr(E(\rho)x)=\tr(\rho E^*(x)),
  \qquad
  \rho\in\mathcal D(\mathcal H),\quad x\in\mathcal B(\mathcal H)\,.
\label{eq:gd-preadjoint}
\end{equation}
Hence $E$ is a completely positive trace-preserving channel and is
idempotent $E^2=E$. Every state $\sigma\in \operatorname{Ran}E$ is a invariant state  $\sigma=E(\sigma)$. 
Moreover, if it admits a faithful
invariant state $\sigma$ so that $E(\sigma)=\sigma$ and $\sigma>0$, then
$E^*$ is automatically faithful. Indeed,
\[
 \tr\!\left(\sigma E^*(x^*x)\right)
 =\tr\!\left(E(\sigma)x^*x\right)
 =\tr(\sigma x^*x),
\]
and the right-hand side vanishes only when $x=0$.  

The conditional-expectation satisfies the chain rule for relative entropy
\begin{equation}
  D(\rho\|\omega)
  =
  D(\rho\|E(\rho))
  +
  D(E(\rho)\|\omega),
  \qquad \omega\in\mathcal D(E)\,,
  \label{eq:gd-pythagorean}
\end{equation}
and hence
\[
  D(\rho\|E(\rho))
  =
  \inf_{\omega=E(\omega)}D(\rho\|\omega)\,,
\]
with the minimum attained at $E(\rho)$
\cite{junge2019stability}.

The crucial quantity we care about here for a conditional expectation is the Pimsner and Popa index \cite{PimsnerPopa1986}.
\begin{definition}[Pimsner--Popa index]
For a faithful conditional expectation channel $E$, define
\begin{align}
C(E)
&:=
\inf\{C>0:\rho\le CE(\rho)
       \text{ for every state }\rho\},
\label{eq:gd-index}\\
C_{\mathrm{cb}}(E)
&:=
\sup_{k\ge1}C(E\otimes\id_{M_k})\,.
\label{eq:gd-cb-index}
\end{align}
\end{definition}
The constant $C(E)$ was first introduced by Pimsner and Popa~\cite{PimsnerPopa1986} as finite-dimensional variants of the subfactor index initiated by Jones~\cite{Jones1983}. Its
completely bounded version $C_{\mathrm{cb}}(E)$ was studied
in~\cite{gaoindex} and subsequently used for complete entropic inequalities
in~\cite{GaoRouze2022}.

In finite dimensions, a faithful conditional expectation $E^*:\mathcal B(\mathcal H)\longrightarrow\mathcal N$, up to unitary equivalence,
admits the following decompositions \cite{GaoRouze2022}
\[
\mathcal H=\bigoplus_{i=1}^n\mathcal H_i\otimes\mathcal K_i,
\qquad
\mathcal N=\bigoplus_{i=1}^n
\mathcal B(\mathcal H_i)\otimes\mathbb C\mathds 1_{\mathcal K_i},
\]
and 
\begin{align}
E^*(x)
&=\bigoplus_{i=1}^n
\tr_{\mathcal K_i}\!\left[
 (\mathds 1_{\mathcal H_i}\otimes\tau_i)P_ixP_i
\right]\otimes\mathds 1_{\mathcal K_i},
\label{eq:gd-E-star-block}\\
E(\rho)&=\bigoplus_{i=1}^n
\tr_{\mathcal K_i}(P_i\rho P_i)\otimes\tau_i\,,
\label{eq:gd-Estar-block}
\end{align}
where $\tau_i\in\mathcal D(\mathcal K_i)$ are some faithful states.
Moreover, a state $\sigma$ is preserved by $E$ if and only if
\begin{equation}
 \sigma=\bigoplus_{i=1}^n p_i\,\sigma_i\otimes\tau_i,
 \qquad
 p_i\ge0,\quad \sum_i p_i=1,\quad
 \sigma_i\in\mathcal D(\mathcal H_i).
\label{eq:gd-invariant-state-block}
\end{equation}
Such a state $\sigma$ is faithful precisely when $p_i>0$ and $\sigma_i>0$ for every $i$.

We record the finite dimension formulas for Pimsner-Popa index below. 

\begin{proposition}[Index formula] Let $E$ be a faithful conditional expectation with decomposition as above. Let $m_i=\dim\mathcal H_i$, $n_i=\dim\mathcal K_i$, and let
$s_{i,1}\le\cdots\le s_{i,n_i}$ be the eigenvalues of $\tau_i$.  Then,
\label{prop:gd-index-structure}
\begin{equation}
C(E)=\sum_{i=1}^n\sum_{j=1}^{\min(m_i,n_i)}\frac1{s_{i,j}},
\qquad
C_{\mathrm{cb}}(E)=\sum_{i=1}^n\sum_{j=1}^{n_i}\frac1{s_{i,j}}\,.
\label{eq:gd-index-formulas}
\end{equation}
In particular, $C(E)=C_{\mathrm{cb}}(E)$ whenever $m_i\ge n_i$ in every
block.
\end{proposition}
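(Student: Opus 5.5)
We compute $C(E)$ directly from the block form of $E$. The plan is to reduce to a single block, then to a statement about $\mathcal B(\mathcal H_i)\otimes\mathcal B(\mathcal K_i)$, and finally to read off $C_{\mathrm{cb}}$ by noting that amplification only enlarges $\mathcal H_i$.

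\emph{Step 1: reduction to one block.} Since $C(E)$ is the least $C$ with $\rho\le CE(\rho)$ for all states, it suffices to take $\rho=|\psi\rangle\langle\psi|$ pure. Write $\psi=\sum_i\psi_i$ with $\psi_i\in\mathcal H_i\otimes\mathcal K_i$. Then
\[
E(\rho)=\bigoplus_i \tr_{\mathcal K_i}(|\psi_i\rangle\langle\psi_i|)\otimes\tau_i .
\]
The condition $|\psi\rangle\langle\psi|\le CX$ for $X>0$ on the support is equivalent to $\langle\psi|X^{-1}|\psi\rangle\le C$. Because $E(\rho)$ is block diagonal, this quantity splits as $\sum_i\langle\psi_i|E_i(\psi_i)^{-1}|\psi_i\rangle$, with inverses taken on supports. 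Each summand is homogeneous of degree $0$ in $\psi_i$. Hence the supremum over $\psi$ is the sum over $i$ of the single-block suprema, and
\[
C(E)=\sum_i C(E_i),\qquad E_i(\rho)=\tr_{\mathcal K}(\rho)\otimes\tau \quad\text{on } \mathcal H\otimes\mathcal K .
\]

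\emph{Step 2: one block.} Take a Schmidt decomposition $\psi=\sum_{k=1}^r\sqrt{\lambda_k}\,e_k\otimes f_k$ with $r\le\min(m,n)$. Then $\tr_{\mathcal K}\rho=\sum_k\lambda_k|e_k\rangle\langle e_k|$, and on its support $(\tr_{\mathcal K}\rho)^{-1}\otimes\tau^{-1}$ gives
\[
\langle\psi|X^{-1}|\psi\rangle=\sum_{k,l}\sqrt{\lambda_k\lambda_l}\,\lambda_k^{-1}\delta_{kl}\langle f_k|\tau^{-1}|f_l\rangle=\sum_{k=1}^r\langle f_k|\tau^{-1}|f_k\rangle .
\]
We then maximize over orthonormal families $\{f_k\}$ with $r\le\min(m,n)$. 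By Ky Fan's maximum principle, the maximum is the sum of the $\min(m,n)$ largest eigenvalues of $\tau^{-1}$, namely $\sum_{j\le\min(m,n)}1/s_j$ with the eigenvalues of $\tau$ ordered increasingly. It is attained by choosing $f_k$ as the eigenvectors of $\tau$ for its smallest eigenvalues, with $\lambda_k$ arbitrary and positive. This gives the formula for $C(E)$.

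\emph{Step 3: the cb index.} The channel $E\otimes\id_{M_k}$ is again a faithful conditional expectation. Its block decomposition replaces $\mathcal H_i$ by $\mathcal H_i\otimes\mathbb C^k$ and keeps $\mathcal K_i$ and $\tau_i$ unchanged, so $m_i$ becomes $km_i$. By Step 2, $C(E\otimes\id_{M_k})$ is nondecreasing in $k$ and equals $\sum_i\sum_{j\le n_i}1/s_{i,j}$ once $k\ge\max_i n_i$. This gives $C_{\mathrm{cb}}(E)$, and the final remark follows because $\min(m_i,n_i)=n_i$ when $m_i\ge n_i$.

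\emph{Main obstacle.} The delicate step is the rigorous form of the rank-one criterion $|\psi\rangle\langle\psi|\le CX \iff \psi\in\operatorname{supp}X$ and $\langle\psi|X^{-1}|\psi\rangle\le C$, applied when $X=E(\rho)$ is not faithful. One must check that $\psi$ lies in the support of $\tr_{\mathcal K}(\rho)\otimes\tau$; this holds because $\tau$ is faithful and the $e_k$ span the support of the partial trace. One must also check that reducing to pure states is legitimate. It is: if $\rho=\sum_k\mu_k\rho_k$ with each $\rho_k\le CE(\rho_k)$, then by linearity of $E$, $\rho\le CE(\rho)$.
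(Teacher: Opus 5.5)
Your proposal is correct and follows essentially the same route as the paper: you reduce to pure states, use the rank-one domination criterion $|\psi\rangle\langle\psi|\le cX \iff \langle\psi|X^{-1}|\psi\rangle\le c$ on the support, and use degree-zero homogeneity to split into block suprema; then a Schmidt decomposition plus the Ky Fan variational principle handles a single block, and amplification replaces $m_i$ by $km_i$. Your explicit checks of the support condition and of the pure-state reduction are slightly more careful than the paper's write-up, but the argument is the same.
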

\begin{proof}
 By convexity it suffices
to consider pure state $\rho=|\psi\rangle\langle\psi|$.  Writing
$\psi_i=P_i\psi$ and
$\sigma_i=\tr_{\mathcal K_i}|\psi_i\rangle\langle\psi_i|$, set
$A_\psi=E(|\psi\rangle\langle\psi|)
=\bigoplus_i\sigma_i\otimes\tau_i$.  Rank one domination yields
\begin{align}
 |\psi\rangle\langle\psi|\le cA_\psi
 \quad\Longleftrightarrow\quad
 c\ge\sum_{i:\psi_i\ne0}
 \langle\psi_i|(\sigma_i^{-1}\otimes\tau_i^{-1})|\psi_i\rangle, \label{eq:indexbound} 
\end{align}
where inverses are restricted to the relevant supports.

To make the direct-sum contribution explicit, write
$\psi_i=\sqrt{p_i}\,\varphi_i$, where
$p_i=\|\psi_i\|^2$, $\sum_i p_i=1$, and $\|\varphi_i\|=1$ whenever
$p_i>0$.  If
$\eta_i=\tr_{\mathcal K_i}|\varphi_i\rangle\langle\varphi_i|$, then
$\sigma_i=p_i\eta_i$, and hence
\[
 \langle\psi_i|(\sigma_i^{-1}\otimes\tau_i^{-1})|\psi_i\rangle
 =
 \langle\varphi_i|(\eta_i^{-1}\otimes\tau_i^{-1})|\varphi_i\rangle.
\]
The weight $p_i$ therefore cancels in every nonzero block, hence does not affect the index lower bound in \eqref{eq:indexbound}.  We may choose all
$p_i>0$ and optimize the normalized vectors $\varphi_i$ independently, so
the supremum over the direct sum is the sum of the block suprema.

For one block, let
$\varphi_i=\sum_{j=1}^r\sqrt{\lambda_j}\,u_j\otimes v_j$ be a Schmidt
decomposition.  Its contribution is
\[
 \sum_{j=1}^r\langle v_j,\tau_i^{-1}v_j\rangle,
 \qquad r\le\min(m_i,n_i).
\]
By the variational principle, its maximum is the sum of the
$\min(m_i,n_i)$ largest eigenvalues of $\tau_i^{-1}$, namely
$\sum_{j=1}^{\min(m_i,n_i)}s_{i,j}^{-1}$.  Summing these block optima gives
the formula for $C(E)$.

For $E\otimes\id_{M_k}$, the dimension $m_i$ is replaced by $km_i$.
For all sufficiently large $k$ such $km_i\ge n_i$ for every $i$, every eigenvalue of $\tau_i^{-1}$
contributes, and
\[
 C_{\mathrm{cb}}(E)=\sum_i\tr(\tau_i^{-1})
 =\sum_i\sum_{j=1}^{n_i}s_{i,j}^{-1}.\qedhere
\]
\end{proof}

\begin{remark}{\rm 
For the trace-preserving conditional expectation, one has
$\tau_i=\mathds 1_{\mathcal K_i}/n_i$, recovering the usual
Pimsner--Popa formula \cite[Theorem 6.1]{PimsnerPopa1986}.
}
\end{remark}
\subsection{Modified logarithmic Sobolev inequalities}
\label{subsec:gd-mlsi}

In finite dimensions, a quantum Markov semigroup (QMS) is a continuous
family of quantum channels $(P_t)_{t\ge0}$ satisfying
$P_0=\id$ and $P_{t+s}=P_tP_s$.  Its generator $\mathcal L$, called Lindbladian, is
defined by $\mathcal{L}=\lim_{t\to 0}\frac{1}{t}(P_t-\id), P_t=e^{t\mathcal L}$. Here we consider the generalized depolarizing semigroup
\[P_t^E=e^{-t}\id+(1-e^{-t})E, \mathcal L=E-\id,\]
associated to a faithful conditional expectation channel $E$.

For example, for a faithful state $\sigma$ and the expectation map $E_\sigma(\rho)=\tr(\rho)\sigma$,
\[P_t(\rho)=e^{-t}\rho+(1-e^{-t})\tr(\rho)\sigma,\]
is the $\sigma$-biased depolarizing semigroup. In general, when $\mathcal{N}\neq \mathbb{C}1$ is not-trivial, $P_t^E$ is in general non-primitive, i.e., every initial state $\rho$ converges to its equilibrium state $E(\rho)$ which are not necessarily unique. Here, we are in particular interested in exponential decay of relative entropy of $P_t^E(\rho)$ with respect to $E(\rho)$. Such quantitative estimate are studied through modified logarithmic Sobolev
inequalities (MLSI) (see e.g. \cite{Bardet17}).

\begin{definition}[MLSI constant]
The modified logarithmic Sobolev constant $\alpha(P_t^E)$ is the largest
$\alpha\ge0$ such that for every state $\rho$ and every $t\ge0$,
\begin{equation}
D(P_t^E(\rho)\|E(\rho))
\le e^{-2\alpha t}D(\rho\|E(\rho)).
\label{eq:gd-MLSI}
\end{equation}
\end{definition}

The term MLSI originally refers to the equivalent differential form of
\eqref{eq:gd-MLSI}.  Differentiating \eqref{eq:gd-MLSI} at $t=0$ gives
\[
 2\alpha D(\rho\|E(\rho))\leq\mathcal I_E(\rho).
\]
where 
\begin{align}
\mathcal I_E(\rho)
&:=-\left.\frac d{dt}\right|_{t=0}
D(P_t^E(\rho)\|E(\rho))\nonumber=\tr\!\left[(\rho-E(\rho))
(\ln\rho-\ln E(\rho))\right].
\label{eq:gd-entropy-production}
\end{align}
is the entropy production of the generator $E-\id$ at $\rho$, also called the
relative Fisher information. Note that 
\[
 \mathcal I_E(\rho)
 =D(\rho\|E(\rho))+D(E(\rho)\|\rho).
\]
Consequently, the optimal constant
\begin{equation}
\alpha(P_t^E)
=\frac{1}{2}\inf_{\rho\ne E(\rho)}
\frac{\mathcal I_E(\rho)}{D(\rho\|E(\rho))}=\frac12\bigl(1+\inf_{\rho\ne E(\rho)}
\frac{D(E(\rho)\|\rho)}{D(\rho\|E(\rho))}\bigr) \,,
\label{eq:gd-alpha-reverse}
\end{equation}
Hence, the infinitesimal MLSI constant is exactly a reverse relative-entropy
ratio.  Moreover, the index relation $\rho\le C(E)E(\rho)$ allows us to
apply the reverse comparison in
Theorem~\ref{thm:relative-entropy-comparisons} with $\sigma=E(\rho)$ and
$C=C(E)$.

\begin{corollary}[MLSI lower bound]
\label{cor:gd-MLSI}
For every faithful conditional-expectation channel $E$,
\begin{equation}
\alpha(P_t^E)\ge
\frac12\left(1+\frac{C(E)-\ln C(E)-1}{C(E)\ln C(E)-C(E)+1}\right)\,.
\label{eq:gd-MLSI-bound}
\end{equation}
\end{corollary}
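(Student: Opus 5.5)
The plan is to combine the variational formula \eqref{eq:gd-alpha-reverse} with the reverse ratio bound \eqref{eq:reverse-comparison} of Proposition~\ref{thm:relative-entropy-comparisons}. The only subtlety is that the proposition is stated with $C$ equal to the \emph{minimal} order constant of the pair. Here we only know that this constant is at most $C(E)$, so a monotonicity step is needed.

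First, I would justify \eqref{eq:gd-alpha-reverse}, that is, that the semigroup constant equals its infinitesimal version. The differential inequality at $t=0$ gives ``$\le$''. For the converse, assume $2\alpha D(\rho\|E(\rho))\le \mathcal I_E(\rho)$ for every $\rho$. Since $E P_t^E=E$, we have $E(P_t^E(\rho))=E(\rho)$. Applying the inequality to $\rho_t=P_t^E(\rho)$ gives
\[
\frac{d}{dt}D(\rho_t\|E(\rho))=-\mathcal I_E(\rho_t)\le -2\alpha D(\rho_t\|E(\rho)),
\]
and Gr\"onwall yields \eqref{eq:gd-MLSI}. The identity $\mathcal I_E(\rho)=D(\rho\|E(\rho))+D(E(\rho)\|\rho)$ uses $\tr(E(\rho)\ln E(\rho))=\tr(\rho\ln E(\rho))$. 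This holds because $\ln E(\rho)$ lies in $\mathcal N$ by the block form \eqref{eq:gd-Estar-block}: on the support, $\ln E(\rho)=\bigoplus_i(\ln\sigma_i\otimes 1+1\otimes\ln\tau_i)$, and $E^*$ fixes such elements. The faithfulness of $\tau_i$ ensures $\operatorname{supp}\rho\subseteq\operatorname{supp}E(\rho)$.

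Second, fix $\rho\neq E(\rho)$ and let $C_\rho=\inf\{c:\rho\le cE(\rho)\}$. Then $1<C_\rho\le C(E)$, where strictness follows from the remark that $C_\rho=1$ iff $\rho=E(\rho)$. Proposition~\ref{thm:relative-entropy-comparisons}(i) with $\sigma=E(\rho)$ gives
\[
\frac{D(E(\rho)\|\rho)}{D(\rho\|E(\rho))}\ge r(C_\rho),\qquad r(t)=\frac{t-1-\ln t}{t\ln t-t+1}.
\]
In its first proof, $r$ was shown to be non-increasing on $(0,\infty)$, so $r(C_\rho)\ge r(C(E))$. Alternatively, one can rerun the Hockey-Stick argument directly, since $dQ_{\rho,E(\rho)}$ is supported in $[0,C(E)]$. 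Taking the infimum over $\rho$ and inserting into \eqref{eq:gd-alpha-reverse} gives \eqref{eq:gd-MLSI-bound}.

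The main obstacle is this monotonicity step, namely not applying the proposition with a non-minimal constant. It is handled by the sign computation $r'(t)\le 0$ already carried out. The remaining care is for non-faithful $\rho$: I would reduce to faithful ones by replacing $\rho$ with $\rho_\varepsilon=(1-\varepsilon)\rho+\varepsilon E(\rho)$. This leaves $E(\rho)$ unchanged and keeps the order constant below $C(E)$; then let $\varepsilon\to0$ using continuity of both relative entropies on this segment.
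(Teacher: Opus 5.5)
Your proposal is correct and follows the paper's route. You insert the reverse-ratio bound of Proposition~\ref{thm:relative-entropy-comparisons}(i), with $\sigma=E(\rho)$, into the variational formula \eqref{eq:gd-alpha-reverse}. Your Gr\"onwall argument and your use of the monotonicity of $r$ (needed because $C(E)$ only bounds the minimal constant $C_\rho$ from above) fill in details the paper leaves implicit.

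One side remark is wrong and should be dropped. The identity $\mathcal I_E(\rho)=D(\rho\|E(\rho))+D(E(\rho)\|\rho)$ follows simply by expanding the trace. Your justification that $\ln E(\rho)\in\mathcal N$ is false for non-tracial $E$, since $E^*(\mathds 1\otimes\ln\tau_i)=\tr(\tau_i\ln\tau_i)\mathds 1$. For example, for $E(\rho)=\tr(\rho)\sigma$ one has $\ln\sigma\notin\mathbb C\mathds 1$, and $\tr(\rho\ln\sigma)\neq\tr(\sigma\ln\sigma)$ in general.
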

\begin{remark}[Comparison to Curvature bound] {\rm 
For the semigroup $P_t^E$ generated by a conditional
expectation, M\"unch, Wirth, and Zhang ~\cite{MunchWirthZhang2024} obtained intertwining curvature bound, which by quantum Bakry-Emery theorem \cite{DattaRouze18,CarlenMaas18} implies the following entropy contraction rate
\[
 \alpha(P_t^E)\ge\frac12+\frac1{C(E)+1}.
\]
Our entropy-ratio estimate improves the correction of order \(C(E)^{-1}\)
to \((\ln C(E))^{-1}\), which as shown below is also tight. }
\end{remark}

\subsection{Convexity ratios and entropy contraction}
\label{subsec:gd-sdpi}
Let $E$ be a faithful conditional expectation channel. For $0<p<1$, consider the generalized depolarizing channel
\begin{equation}
\Phi_p^E(\rho)=p\rho+(1-p)E(\rho)\,.
\label{eq:gd-Phi-star}
\end{equation}
This channel is a single time map of the corresponding semigroup:
\begin{equation}
\Phi_p^E=P_{-\ln p}^E\,,
\label{eq:gd-continuous-discrete}
\end{equation}
where the discrete parameter $p$ is simply the retained weight of the initial
state at time $t=-\ln p$.
\begin{definition}[Entropy contraction constant]
The relative entropy contraction constant associated with the fixed-point
channel \(E\) is
\begin{equation}
\eta(\Phi_p^E):=
\sup_{\rho\ne E(\rho)}
\frac{D(\Phi_p^E(\rho)\|E(\rho))}
     {D(\rho\|E(\rho))}=\sup_{\rho\ne E(\rho)}
\frac{D(p\rho+(1-p)E(\rho)\|E(\rho))}
     {D(\rho\|E(\rho))}\,.
\label{eq:gd-eta}
\end{equation}
\end{definition}
Hence the finite time entropy contraction problem is a convexity ratio.  As in the MLSI
argument, the index relation allows us to apply the convexity comparison
in Theorem~\ref{thm:relative-entropy-comparisons}.
\begin{corollary}[Entropy contraction]
\label{cor:gd-entropy-contraction}
Let $C=C(E)$ and $a_p(C)=1+p(C-1)$.  Then
\[
 \eta(\Phi_p^E)
 \le \frac{\phi(a_p(C))}{\phi(C)}\, ,
\]
where \(\phi(x)=x\ln x-x+1\).
\end{corollary}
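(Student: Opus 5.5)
The plan is to reduce the corollary, state by state, to the upper convexity bound in Proposition~\ref{thm:relative-entropy-comparisons}\,(ii), taking $\sigma=E(\rho)$ there. Fix a state $\rho$ with $\rho\ne E(\rho)$. By the definition \eqref{eq:gd-index} of the Pimsner--Popa index, $\rho\le C(E)\,E(\rho)$. Hence the minimal order constant of the pair satisfies
\[
 C_\rho:=\inf\{c>0:\rho\le c\,E(\rho)\}\le C(E).
\]
Since $\rho\neq E(\rho)$ and both are states, we also have $C_\rho>1$. Applying the upper bound in \eqref{eq:output-convexity-comparison} with $\sigma=E(\rho)$ then gives
\[
 \frac{D(p\rho+(1-p)E(\rho)\|E(\rho))}{D(\rho\|E(\rho))}\le \frac{\phi(a_p(C_\rho))}{\phi(C_\rho)}.
\]
No support issue arises: the proposition handles non-faithful $\rho$ through the perturbation argument, and $\operatorname{supp}\rho\subseteq\operatorname{supp}E(\rho)$ follows from $\rho\le C E(\rho)$.

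The only remaining step is to replace $C_\rho$ by the larger constant $C(E)$. For this we need the function $\mathsf B_p(C)=\phi(a_p(C))/\phi(C)$ to be non-decreasing on $(1,\infty)$. I would obtain this directly from the first proof of the proposition. There it was shown that $R_p(t)=f_p(t)/\phi(t)=\phi(a_p(t))/\phi(t)$ is increasing on $(1,\infty)$. That argument used the monotone L'H\^opital rule, with $f_p(1)=\phi(1)=f_p'(1)=\phi'(1)=0$ and the ratio of second derivatives $p^2t/a_p(t)$ increasing. None of this depends on $C$, so $R_p$ is increasing on all of $(1,\infty)$. Since $\mathsf B_p=R_p$ there, $\mathsf B_p(C_\rho)\le \mathsf B_p(C(E))$.

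Taking the supremum over $\rho\ne E(\rho)$ in \eqref{eq:gd-eta} then yields the claim. If $C(E)=1$, then $E=\id$ and there is nothing to prove.

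The step needing the most care is this monotonicity in $C$. It is what allows a single worst-case constant to be used uniformly over all states, each with its own $C_\rho$. I expect it to go through cleanly once I check that the monotone L'H\^opital step in the earlier proof is valid on the unbounded interval $(1,\infty)$ and not only on $(1,C]$. The step really applies on $(1,t]$ for each $t$, so this should be fine.
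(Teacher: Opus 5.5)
Your proof is correct and follows the paper's route: the paper applies the upper bound in \eqref{eq:output-convexity-comparison} with $\sigma=E(\rho)$ and $C=C(E)$, using only the index relation $\rho\le C(E)E(\rho)$. The paper does not say why it may use $C(E)$ when this can exceed the minimal order constant of the pair; implicitly, its proof only needs $dQ_{\rho,\sigma}$ to be supported on $[0,C]$ together with the monotonicity of $R_p$. Your explicit step, that $\mathsf B_p=R_p$ is non-decreasing on $(1,\infty)$ via the monotone L'H\^opital rule on each $(1,b]$, is exactly what closes this.
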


The convexity and MLSI entropy contraction estimates are compatible.
\begin{remark}[Compatibility at infinitesimal time]{\rm Write the index $C=C(E)$.
Recall that the MLSI bound gives the continuous time entropy contraction
\begin{equation}
\eta(P_t^E)\le e^{-2\alpha(C)t} , \qquad \alpha(C)=\frac12\left(1+\frac{C-\ln C-1}{C\ln C-C+1}\right)
\label{eq:gd-MLSI-to-entropy-contraction}
\end{equation} 
The convexity ratio estimate gives
\[
  \eta(\Phi_p^E)\le \frac{\phi(1+p(C-1))}{\phi(C)} \,.
\]
For $p=e^{-t}$, the one step estimate contains stronger finite time information that
\[ \frac{\phi(1+ (C-1)e^{-t})}{\phi(C)}\le e^{-2\alpha(C)t}. \]
Moreover, since $\eta_C(1)=1$, the exponential decay rate at $t=0$ is
\begin{align}
-\frac12\left.\frac d{dt}\right|_{t=0}
\ln \eta_C(e^{-t})
&=\frac12\left.\frac d{dp}\right|_{p=1}
\ln \eta_C(p) 
=\frac12\frac{(C-1)\ln C}{\phi(C)}\nonumber\\
&=\frac12\left(1+\frac{C-\ln C-1}{C\ln C-C+1}\right)\, ,
\label{eq:gd-tangent}
\end{align}
which exactly the MLSI lower bound obtained from the reverse-ratio
estimate.  Hence, the reverse ratio result is the infinitesimal counterpart of
the finite time convexity ratio bound.  Notice that this calculation concerns
the upper bound of $\eta(\Phi_p^E)$, not its exact value for a fixed conditional expectation $E$.}
\end{remark}

\subsection{Tensorization}
\label{subsec:gd-tensorization}

Tensorization is an important property of functional inequalities that transfers local estimate to many-body
dynamics without loss in the number of components.  For two independently
evolving Markov semigroups, their product semigroup at time \(t\ge 0\) is
\(P_t^{E_1}\otimes P_t^{E_2}\).
In the classical setting, the ordinary constants obey exact tensorization
\[
\alpha(P_t\otimes Q_t)=\min\{\alpha(P_t),\alpha(Q_t)\},
\qquad
\eta(P_t\otimes Q_t)=\max\{\eta(P_t),\eta(Q_t)\}.
\]
Nevertheless, the quantum MLSI is not stable under tensoring with a passive ancilla or environment.
For the qubit depolarizing channel
$E_2(\rho)=\tr(\rho)\mathds 1_2/2$, one has
\begin{equation}
\frac12
\le \alpha(\id_{M_2}\otimes P_t^{E_2})
<0.965<1=\alpha(P_t^{E_2}).
\label{eq:gd-qubit-nontensorization}
\end{equation}
See \cite[Section~4.3]{BrannanGaoJunge2022}; the example is also discussed
in~\cite{GaoJungeLaRacuenteLi2025}.  It motivates complete MLSI, which
requires uniform control under
all finite-dimensional amplifications.
\begin{equation}
\alpha_{\mathrm c}(P_t^E)
\!:=\inf_{k\ge1}\alpha(P_t^E\otimes\id_{M_k}),
\qquad
\eta_{\mathrm c}(\Phi_p^E)
\!:=\sup_{k\ge1}\eta(\Phi_p^E\otimes\id_{M_k}).
\label{eq:gd-complete-constants}
\end{equation}

The conditional expectation chain rule gives the exact tensorization
\cite{gao2020fisher,GaoRouze2022}
\begin{align}
\alpha_{\mathrm c}(P_t^{E_1}\otimes P_t^{E_2})
&=\min\{\alpha_{\mathrm c}(P_t^{E_1}),
         \alpha_{\mathrm c}(P_t^{E_2})\}, \nonumber\\
\eta_{\mathrm c}(P_t^{E_1}\otimes P_t^{E_2})
&=\max\{\eta_{\mathrm c}(P_t^{E_1}),
         \eta_{\mathrm c}(P_t^{E_2})\}.
\label{eq:gd-complete-decomposition}
\end{align}
In particular, these strong constants satisfies the following system size independent
consequence for every $n\ge 1$,
\[
\alpha_{\mathrm c}
\bigl((P_t^E)^{\otimes n}\bigr)
=
\alpha_{\mathrm c}(P_t^E),
\qquad
\eta_{\mathrm c}
\bigl((P_t^E)^{\otimes n}\bigr)
=
\eta_{\mathrm c}(P_t^E)\,.
\]

For a $d$-dimensional qudit, let
$E_d(\rho)=\tr(\rho)\mathds 1_d/d$ be the completely
depolarizing channel.  Although $n$ qudits live in dimension $d^n$,
the tensorization reduces their local depolarizing
dynamics to the one-qudit constants, with no loss in $n$.  This is the
relevant scaling for noisy quantum circuits and variational algorithms
\cite{FrancaGarciaPatron2021}.

Our condition expectation formulation immediately applies to the complete entropy contraction constants by
applying Corollary \ref{cor:gd-MLSI} and \ref{cor:gd-entropy-contraction} to every amplification and using the complete index
$ C_{\mathrm{cb}}(E)\ge C(E\otimes\id_{M_k})$.

\begin{corollary}[Complete entropy contraction]
\label{cor:gd-complete-index-bounds}Let $E$ be a faithful conditional expectation and $C_{cb}(E)$ be its complete index.
\begin{align}
\alpha_{\mathrm c}(P_t^E)
&\ge
\frac12\left(
1+\frac{C_{\mathrm{cb}}(E)-\ln C_{\mathrm{cb}}(E)-1}
{C_{\mathrm{cb}}(E)\ln C_{\mathrm{cb}}(E)-C_{\mathrm{cb}}(E)+1}
\right),
\label{eq:gd-cMLSI-bound}
\\
\eta_{\mathrm c}(\Phi_p^E)
&\le
\frac{\phi(1+p(C_{\mathrm{cb}}(E)-1))}
     {\phi(C_{\mathrm{cb}}(E))}\,.
\label{eq:gd-complete-entropy-contraction-bound}
\end{align}
\end{corollary}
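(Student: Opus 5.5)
The plan is to apply Corollary~\ref{cor:gd-MLSI} and Corollary~\ref{cor:gd-entropy-contraction} to each amplification $E\otimes\id_{M_k}$, and then take the infimum/supremum over $k$ using monotonicity of the scalar bound functions in the index. First, $E\otimes\id_{M_k}$ is again a faithful conditional expectation: it is idempotent and unital completely positive in the Heisenberg picture, with range $\mathcal N\otimes M_k$. If $\sigma$ is a faithful invariant state of $E$, then $\sigma\otimes\mathds 1_k/k$ is a faithful invariant state of $E\otimes\id_{M_k}$. Moreover,
\[
P_t^{E\otimes\id_{M_k}}=P_t^E\otimes\id_{M_k},\qquad \Phi_p^{E\otimes\id_{M_k}}=\Phi_p^E\otimes\id_{M_k},
\]
since $e^{-t}\id+(1-e^{-t})(E\otimes\id)=(e^{-t}\id+(1-e^{-t})E)\otimes\id$. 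So for every $k$ the earlier corollaries give
\[
\alpha(P_t^E\otimes\id_{M_k})\ge \mathsf A\bigl(C(E\otimes\id_{M_k})\bigr),\qquad
\eta(\Phi_p^E\otimes\id_{M_k})\le \mathsf B_p\bigl(C(E\otimes\id_{M_k})\bigr).
\]

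Next I use $C_k:=C(E\otimes\id_{M_k})\le C_{\mathrm{cb}}(E)$, which is finite by Proposition~\ref{prop:gd-index-structure}. It then suffices to show that $\mathsf A$ is non-increasing and $\mathsf B_p$ is non-decreasing in $C\in[1,\infty)$.

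For $\mathsf A$, the proof of Proposition~\ref{thm:relative-entropy-comparisons} already shows that $r(t)=\phi^\diamond(t)/\phi(t)$ is non-increasing, so $\mathsf A(C)=\tfrac12(1+r(C))$ is non-increasing.

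For $\mathsf B_p$, the same proof shows that $R_p(t)=f_p(t)/\phi(t)=\phi(a_p(t))/\phi(t)$ is increasing on $(1,\infty)$. Monotone L'H\^opital applies there with $H_p=p^2t/a_p(t)$ increasing on all of $(0,\infty)$, not just on $[0,C]$. Since $\mathsf B_p(C)=R_p(C)$, this gives the needed monotonicity. The alternative route is to notice that the order condition $\rho\le C_k(E\otimes\id)(\rho)$ implies $\rho\le C_{\mathrm{cb}}(E)\,(E\otimes\id)(\rho)$. Then Proposition~\ref{thm:relative-entropy-comparisons} can be applied directly with the constant $C_{\mathrm{cb}}$. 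Strictly, that proposition is stated with the minimal constant, but both proofs only use $dQ_{\rho,\sigma}$ being supported on $[0,C]$, or $\rho\le C\sigma$, for any admissible $C$. I will state that remark explicitly, because it is the cleanest way around the issue.

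Taking $\inf_k$ and $\sup_k$ then yields \eqref{eq:gd-cMLSI-bound} and \eqref{eq:gd-complete-entropy-contraction-bound}.

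The only point needing care is this admissible-versus-minimal constant issue. It is settled either by the monotonicity of $r$ and $R_p$ on the whole half-line or by the observation that the proofs use only an upper bound on the support of $Q_{\rho,\sigma}$. The degenerate case $C_k=1$, where $E\otimes\id$ is the identity, is handled by the continuity conventions $\mathsf A(1)=1$ and $\mathsf B_p(1)=p^2$, which are consistent with the monotone limits. Everything else is bookkeeping.
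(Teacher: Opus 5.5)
Your proposal is correct and is essentially the paper's own argument. You apply Corollaries~\ref{cor:gd-MLSI} and~\ref{cor:gd-entropy-contraction} to each amplification $E\otimes\id_{M_k}$, take the infimum or supremum over $k$, and use $C(E\otimes\id_{M_k})\le C_{\mathrm{cb}}(E)$. Your explicit check that $\mathsf A$ is non-increasing and $\mathsf B_p$ is non-decreasing in the index supplies the monotonicity step the paper leaves implicit; equivalently, as you note, the bounds of Proposition~\ref{thm:relative-entropy-comparisons} hold for any admissible order constant, not only the minimal one.
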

\subsection{Index witnesses and tightness}
\label{subsec:gd-index-tightness}
The key ingredient for our tightness  analysis is the existence of an index achieving element for a condition expectation. The proposition below is an extension of \cite[]{PimsnerPopa1986}.

\begin{proposition}[Index witnesses]
\label{prop:gd-commuting-index-witness}
Let $E$ be a faithful conditional expectation channel and denote $C=C(E)$.  Then there is
a pure state $\omega=\lvert\psi\rangle\langle\psi\rvert$ such that
\begin{equation}
E(\omega)\lvert\psi\rangle=C^{-1}\lvert\psi\rangle. 
\label{eq:gd-index-witness-eigenvector}
\end{equation}
Hence $[\omega,E(\omega)]=0$, and $C$ is the optimal constant in
$\omega\le CE(\omega)$. 
For all sufficiently large $k$, we have $C(E\otimes \id_{M_k})=C_{\mathrm{cb}}(E)$ and a pure state witness for $C_{cb}(E)$.
\end{proposition}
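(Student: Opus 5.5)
We will construct the witness explicitly from the block decomposition recalled before Proposition~\ref{prop:gd-index-structure}, following the proof of the index formula \eqref{eq:gd-index-formulas}. Write $\mathcal H=\bigoplus_i\mathcal H_i\otimes\mathcal K_i$, let $r_i=\min(m_i,n_i)$, and diagonalize $\tau_i=\sum_j s_{i,j}|v_{i,j}\rangle\langle v_{i,j}|$ with $s_{i,1}\le\cdots\le s_{i,n_i}$. Fix orthonormal vectors $u_{i,1},\dots,u_{i,r_i}\in\mathcal H_i$. For each block set $c_i:=\sum_{j\le r_i}s_{i,j}^{-1}$, so that $C=\sum_i c_i$. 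The candidate witness is
\[
 \psi=\sum_i\sum_{j=1}^{r_i}\sqrt{\lambda_{i,j}}\,u_{i,j}\otimes v_{i,j},
 \qquad \lambda_{i,j}:=\frac{s_{i,j}^{-1}}{C}\cdot s_{i,j}\cdot(\text{normalizer}),
\]
where the weights are chosen below so that the eigenvalue equation holds exactly.

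\emph{Step 1: compute $E(\omega)$.} By \eqref{eq:gd-Estar-block}, $E(\omega)=\bigoplus_i \sigma_i\otimes\tau_i$ with $\sigma_i=\tr_{\mathcal K_i}|\psi_i\rangle\langle\psi_i|=\sum_j\lambda_{i,j}|u_{i,j}\rangle\langle u_{i,j}|$. Here orthonormality of the $v_{i,j}$ kills the cross terms. Hence $E(\omega)$ is diagonal in the product basis $u_{i,j}\otimes v_{i,l}$, and
\[
E(\omega)(u_{i,j}\otimes v_{i,j})=\lambda_{i,j}s_{i,j}\,u_{i,j}\otimes v_{i,j}.
\]
Therefore $\psi$ is an eigenvector with eigenvalue $\mu$ precisely when $\lambda_{i,j}s_{i,j}=\mu$ for all $i,j$, that is, when $\lambda_{i,j}=\mu s_{i,j}^{-1}$. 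Normalization $\sum\lambda_{i,j}=1$ forces $\mu=(\sum_{i,j}s_{i,j}^{-1})^{-1}=C^{-1}$. So we take $\lambda_{i,j}=s_{i,j}^{-1}/C$, which yields \eqref{eq:gd-index-witness-eigenvector}.

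\emph{Step 2: commutation and optimality.} Since $\omega=|\psi\rangle\langle\psi|$ and $\psi$ is an eigenvector of $E(\omega)$, we get $[\omega,E(\omega)]=0$. By \eqref{eq:indexbound}, the least $c$ with $\omega\le cE(\omega)$ is $\langle\psi|E(\omega)^{-1}|\psi\rangle$, where the inverse is taken on the support. This quantity equals $C$ because $E(\omega)^{-1}\psi=C\psi$. Hence $C$ is optimal for this $\omega$, consistent with the definition $C=C(E)$, which gives $\omega\le CE(\omega)$ for all states.

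\emph{Step 3: the cb statement.} We apply the same construction to $E\otimes\id_{M_k}$, which is again a faithful conditional expectation whose blocks are $(\mathcal H_i\otimes\mathbb C^k)\otimes\mathcal K_i$ with the same $\tau_i$. The only change is $m_i\mapsto km_i$. Once $km_i\ge n_i$ for all $i$, we have $\min(km_i,n_i)=n_i$, and by Proposition~\ref{prop:gd-index-structure} this gives $C(E\otimes\id_{M_k})=\sum_i\tr\tau_i^{-1}=C_{\mathrm{cb}}(E)$. Steps 1--2 then produce a pure witness. The sequence $C(E\otimes\id_{M_k})$ is non-decreasing and bounded by this value, which confirms that the supremum is attained.

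The main obstacle is bookkeeping rather than conceptual. One must make sure the reduced state $\sigma_i$ is diagonal, which requires the Schmidt vectors $v_{i,j}$ to be eigenvectors of $\tau_i$ and the $u_{i,j}$ to be orthonormal. One must also check that the support-restricted inverse in \eqref{eq:indexbound} behaves correctly, since $E(\omega)$ need not be faithful. Both are handled by working in the explicit eigenbasis.
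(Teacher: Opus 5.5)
Your proposal is correct and is essentially the paper's proof: the same vector $\psi=\sum_i\sum_{j\le r_i}(Cs_{i,j})^{-1/2}\,u_{i,j}\otimes v_{i,j}$ built from the $r_i$ smallest eigenvalues of each $\tau_i$, the block formula giving eigenvalue $C^{-1}$, and the substitution $m_i\mapsto km_i$ for the cb part. Your Step 2 spells out optimality via the rank-one criterion, which the paper leaves implicit. Delete the placeholder formula for $\lambda_{i,j}$ in your opening display: as written it does not depend on $j$, and it is superseded by the correct choice $\lambda_{i,j}=s_{i,j}^{-1}/C$ in Step 1.
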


\begin{proof}
Put $r_i=\min(m_i,n_i)$.  Let
$\tau_i v_{i,j}=s_{i,j}v_{i,j}$ and choose orthonormal
$u_{i,j}\in\mathcal H_i\otimes\mathbb C$ for
$1\le j\le r_i$.  With
\[
C=\sum_i\sum_{j=1}^{r_i}s_{i,j}^{-1},
\qquad
\lvert\psi\rangle
=\sum_i\sum_{j=1}^{r_i}
  \frac{1}{\sqrt{Cs_{i,j}}}\,u_{i,j}\otimes v_{i,j},
\]
the block formula \eqref{eq:gd-Estar-block} gives
$E(\omega)\lvert\psi\rangle=C^{-1}\lvert\psi\rangle$.
\end{proof}
We have the following two-point mechanism for the tightness analysis.
\begin{lemma}[Binary reduction]
\label{lem:gd-general-binary-reduction}
Let $\sigma$ be a state and let
$\omega=\lvert\psi\rangle\langle\psi\rvert$ be a  pure state.  Suppose that for some $C>1$,
\[
\sigma\lvert\psi\rangle=C^{-1}\lvert\psi\rangle\ .
\]
 Set
$\rho_a=(1-a)\sigma+a\omega$ for 
$-(C-1)^{-1}<a<1$ and $x_a= a+(1-a)C^{-1}\in (0,1)$.
Then, for $0<p<1$,
\begin{align}
\frac{D(\sigma\|\rho_a)}{D(\rho_a\|\sigma)}
=
\frac{D_2(C^{-1}\|x_a)}{D_2(x_a\|C^{-1})}, 
\qquad
\frac{D(p\rho_a+(1-p)\sigma\|\sigma)}
     {D(\rho_a\|\sigma)}
=
\frac{D_2(x_{pa}\|C^{-1})}{D_2(x_a\|C^{-1})}.
\label{eq:gd-index-binary-ratios}
\end{align}
where $D_2$ is the binary relative entropy 
\[
 D_2(u\|v):=u\ln\frac uv+(1-u)\ln\frac{1-u}{1-v}\ ,\  0<u,v<1.
\]
\end{lemma}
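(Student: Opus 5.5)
The plan is to reduce everything to a commuting, essentially two-dimensional computation. Since $\sigma\lvert\psi\rangle=C^{-1}\lvert\psi\rangle$ and $\sigma$ is self-adjoint, $\omega=\lvert\psi\rangle\langle\psi\rvert$ commutes with $\sigma$. Write $P=\omega$ and $P^\perp=\mathds 1-P$. Then $\sigma=C^{-1}P+\sigma'$, where $\sigma'=P^\perp\sigma P^\perp$ has trace $1-C^{-1}$. Consequently
\[
\rho_a=\bigl((1-a)C^{-1}+a\bigr)P+(1-a)\sigma'=x_aP+(1-a)\sigma'.
\]
All states involved are simultaneously block diagonal with respect to $P\oplus P^\perp$. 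The first step is to check positivity and the range of $x_a$. For $a<1$ we have $1-a>0$. Also $x_a>0$ if and only if $a(1-C^{-1})>-C^{-1}$, which is equivalent to $a>-(C-1)^{-1}$. Finally $x_a<1$ because $x_a$ is a convex-type combination of $1$ and $C^{-1}$ with weight $a<1$ on $1$ (in the case $a<0$ it is below $C^{-1}<1$). Hence $\rho_a$ is a positive state with the same support structure as $\sigma$ on $P^\perp$.

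The second step computes the relative entropies blockwise. On the $P$ block the two states are scalars $x_a$ and $C^{-1}$. On the $P^\perp$ block one is $(1-a)\sigma'$ and the other is $\sigma'$, proportional with ratio $1-a=(1-x_a)/(1-C^{-1})$ (indeed $1-x_a=(1-a)(1-C^{-1})$). Since logarithms of commuting proportional operators differ by a scalar, we obtain
\[
D(\rho_a\|\sigma)=x_a\ln\frac{x_a}{C^{-1}}+\tr\bigl((1-a)\sigma'\bigr)\ln(1-a)=D_2(x_a\|C^{-1}).
\]
Symmetrically, $D(\sigma\|\rho_a)=C^{-1}\ln\frac{C^{-1}}{x_a}+(1-C^{-1})\ln\frac{1}{1-a}=D_2(C^{-1}\|x_a)$. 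Here I would restrict to the support of $\sigma$ and note that $\rho_a$ and $\sigma$ share support there, so both quantities are finite.

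The third step handles the convexity ratio. We have $p\rho_a+(1-p)\sigma=(1-pa)\sigma+pa\,\omega=\rho_{pa}$, and $pa$ lies in the admissible range $(-(C-1)^{-1},1)$ because that range is an interval containing $0$ and $a$. Applying the previous formula gives $D(\rho_{pa}\|\sigma)=D_2(x_{pa}\|C^{-1})$. Dividing then yields both identities in \eqref{eq:gd-index-binary-ratios}, with the denominator nonzero whenever $a\neq0$.

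I expect no real obstacle here. The only point needing care is the bookkeeping showing that the $P^\perp$ block contributes exactly $(1-x)\ln\frac{1-x}{1-C^{-1}}$. This follows from the proportionality of the two states on that block and from $\tr\sigma'=1-C^{-1}$, while support issues are avoided by working on $\operatorname{supp}\sigma$.
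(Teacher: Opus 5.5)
Your proposal is correct and follows the paper's route: you decompose $\sigma=C^{-1}\omega\oplus\sigma_\perp$ and $\rho_a=x_a\omega\oplus(1-a)\sigma_\perp$, evaluate both relative entropies blockwise, and use $p\rho_a+(1-p)\sigma=\rho_{pa}$ for the convexity ratio. Your checks of the range of $x_a$, of the identity $1-x_a=(1-a)(1-C^{-1})$, and that the ratios require $a\neq 0$ fill in details the paper leaves implicit.
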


\begin{proof}
The eigenvector assumption gives the orthogonal decompositions
\[
\sigma=C^{-1}\omega\oplus\sigma_\perp,
\qquad
\rho_a=
\left(a+(1-a)C^{-1}\right)\omega
\oplus(1-a)\sigma_\perp.
\]
The claims follow by evaluating both relative
entropies on these two blocks.
\end{proof}
Given an index from Prop. \ref{prop:gd-commuting-index-witness}, set
$\sigma=E(\omega)$ and
$\rho_{a}=(1-a)\sigma+a\omega$.  Then
\[
E(\rho_{a})=\sigma,
\qquad
\Phi_p^{E}(\rho_{a})=\rho_{pa}.
\]
Lemma~\ref{lem:gd-general-binary-reduction}, with $C=C(E)$, therefore turns
both entropy ratios exactly into the binary expressions
\eqref{eq:gd-index-binary-ratios}.  
For $C>1$ and $0<p<1$, set
\begin{equation}
\begin{aligned}
 \mathsf U(C)
 &:=\frac12\left(1+
 \inf_{\substack{0<x<1\\x\ne C^{-1}}}
 \frac{D_2(C^{-1}\|x)}{D_2(x\|C^{-1})}\right),
 \nonumber\\
 \mathsf L_p(C)
 &:=\sup_{\substack{0<x<1\\x\ne C^{-1}}}
 \frac{D_2(px+(1-p)C^{-1}\|C^{-1})}
      {D_2(x\|C^{-1})}.
\end{aligned}
\label{eq:gd-binary-bound-functions}
\end{equation}
 For fixed
$0<p<1$, the calculation in Appendix~\ref{app:binary-asymptotics} gives
\begin{equation}
 \begin{aligned}
  \mathsf U(C)&=\frac12+\frac{1}{2\ln C}
  +O\!\left(\frac{\ln\ln C}{(\ln C)^2}\right),\\
  \mathsf L_p(C)&=p+\frac{p\ln p}{\ln C}
  +O_p\!\left(\frac{\ln\ln C}{(\ln C)^2}\right),
  \qquad C\to\infty.
 \end{aligned}
 \label{eq:gd-binary-bound-asymptotics}
\end{equation}
The variational definitions and \eqref{eq:gd-index-binary-ratios}, together
with Corollaries~\ref{cor:gd-MLSI} and
\ref{cor:gd-entropy-contraction}, give, with $C=C(E)$,
\begin{equation}
 \begin{aligned}
  &\frac12\left(1+\frac{C-1-\ln C}{C\ln C-C+1}\right)
  \le \alpha(P_t^E)\le \mathsf U(C)
  \\
  &\mathsf L_p(C)\le \eta(\Phi_p^E)
  \le \frac{\phi(1+p(C-1))}{\phi(C)}.
 \end{aligned}
 \label{eq:gd-binary-constant-bounds}
\end{equation}

The two-sided bounds above give the following first asymptotic form.
\begin{theorem}
Let $E$ be a faithful conditional expectation with $C=C(E)$.  For every fixed $0<p<1$, 
\begin{equation}
 \begin{aligned}
  \alpha(P_t^E)
  &=\frac12+\frac{1}{2\ln C}
    +O\!\left(\frac{\ln\ln C}{(\ln C)^2}\right),\\
  \eta(\Phi_p^E)
  &=p+\frac{p\ln p}{\ln C}
    +O_p\!\left(\frac{\ln\ln C}{(\ln C)^2}\right),
 \end{aligned}
 \label{eq:gd-ordinary-constant-asymptotics}
\end{equation}
as $C\to\infty$. In particular, the corresponding conclusion holds uniformly over all finite
dimensional amplifications with $C=C_{\mathrm{cb}}(E)$.
\end{theorem}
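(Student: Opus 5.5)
The plan is to squeeze both constants between the two-sided bounds \eqref{eq:gd-binary-constant-bounds} and to show that all four bounding functions share the same first-order expansion in $1/\ln C$. The inner bounds $\mathsf U(C)$ and $\mathsf L_p(C)$ already have the expansions \eqref{eq:gd-binary-bound-asymptotics}, which are taken from Appendix~\ref{app:binary-asymptotics}. So the remaining work is to expand the explicit outer bounds, namely $\mathsf A(C)$ from Corollary~\ref{cor:gd-MLSI} and $\mathsf B_p(C)=\phi(1+p(C-1))/\phi(C)$ from Corollary~\ref{cor:gd-entropy-contraction}, and to check that each has the same leading terms with an error at most $O(\ln\ln C/(\ln C)^2)$. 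These errors are in fact $O(1/(\ln C)^2)$, which is stronger than needed.

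First, for $\mathsf A$, write $L=\ln C$. Then
\[
\frac{C-1-L}{CL-C+1}=\frac{1-(1+L)/C}{L-1+1/C}=\frac{1}{L}+\frac{1}{L^2}+O(L^{-3}),
\]
so $\mathsf A(C)=\tfrac12+\tfrac{1}{2L}+O(L^{-2})$.

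Next, for $\mathsf B_p$, put $a=1+p(C-1)=pC(1+O(1/C))$. Then
\[
\phi(a)=a\ln a-a+1=pC(L+\ln p-1)+O(L),
\]
and $\phi(C)=C(L-1)+1$. Hence
\[
\mathsf B_p(C)=p\,\frac{L+\ln p-1}{L-1}+O_p\!\left(\frac{L}{C}\right)=p+\frac{p\ln p}{L}+O_p(L^{-2}).
\]

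Since $\mathsf A\le\alpha\le\mathsf U$ and $\mathsf L_p\le\eta\le\mathsf B_p$, and the lower and upper bounds agree to order $\ln\ln C/(\ln C)^2$, the two expansions in \eqref{eq:gd-ordinary-constant-asymptotics} follow.

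The main obstacle is the appendix asymptotics for $\mathsf U$ and $\mathsf L_p$, because they are variational quantities. For $\mathsf U$, I would only need an upper bound, and a single trial point $x$ suffices. With $v=C^{-1}$ and $x=1/L$, one has $D_2(x\|v)\approx x\ln(xC)-x\approx (L-\ln L)/L$ and $D_2(v\|x)\approx x+v\ln(v/x)\approx 1/L$. The ratio is then
\[
\frac{1}{L-\ln L}=\frac1L+O\!\left(\frac{\ln L}{L^2}\right),
\]
which is the source of the $\ln\ln C$ term. For $\mathsf L_p$, I would similarly need only a lower bound, again from one trial point $x=1/L$ (or $x\to1$). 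Expanding $D_2(px+(1-p)v\|v)\approx px\ln(pxC)$ divided by $x\ln(xC)$ gives
\[
p\,\frac{L+\ln p-\ln L}{L-\ln L}=p+\frac{p\ln p}{L}+O\!\left(\frac{\ln L}{L^2}\right).
\]
The care needed is in controlling the $(1-u)\ln\frac{1-u}{1-v}$ terms, which are $O(1/L)$ in absolute size and so must be tracked inside the numerators.

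Finally, the complete version follows by the same reasoning. I would apply the argument to $E\otimes\id_{M_k}$ for $k$ large, where by Proposition~\ref{prop:gd-commuting-index-witness} the index equals $C_{\mathrm{cb}}(E)$ and has a pure witness. For smaller $k$, I would combine the upper bounds of Corollary~\ref{cor:gd-complete-index-bounds} with this witness lower bound.
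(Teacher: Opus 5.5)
Your proposal is correct and is essentially the paper's argument. You squeeze $\alpha(P_t^E)$ and $\eta(\Phi_p^E)$ between the explicit bounds $\mathsf A(C)$, $\mathsf B_p(C)$ (whose errors are $O((\ln C)^{-2})$) and the binary-witness bounds $\mathsf U(C)$, $\mathsf L_p(C)$ evaluated at one trial point of size about $1/\ln C$; you use $x=1/\ln C$ where the appendix uses $x_C=(1+\ln C)^{-1}$. You handle the complete case via the large-$k$ pure witness for $C_{\mathrm{cb}}(E)$ together with Corollary~\ref{cor:gd-complete-index-bounds}.

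The only slip is the parenthetical ``(or $x\to1$)'' for $\mathsf L_p$, which you should drop. At $x\to1$ the ratio is $D_2(p+(1-p)C^{-1}\|C^{-1})/\ln C=p+\bigl(p\ln p+(1-p)\ln(1-p)\bigr)/\ln C+o\bigl((\ln C)^{-1}\bigr)$. This misses the first-order term by $(1-p)\ln(1-p)/\ln C$, so the trial point must be taken near $1/\ln C$.
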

\begin{remark}[Relation to the sharp depolarizing result]{\rm 
For the $\sigma$-biased depolarizing semigroup,  $E_\sigma(\rho)=\tr(\rho)\sigma$,
$C(E_\sigma)=\lambda_{\min}(\sigma)^{-1}$.  In this case the binary witness
above coincides with the sharp reduction of M\"uller-Hermes, Stilck
Fran\c{c}a, and Wolf
\cite[Lemma~3.1 and Theorem~3.1]{MullerHermesFrancaWolf2016}.  With
$s=\lambda_{\min}(\sigma)$, their result gives
\begin{equation}
\alpha(P_t^{E_\sigma})
=\frac12\left(1+
\inf_{\substack{0<x<1\\x\ne s}}
\frac{D_2(s\|x)}{D_2(x\|s)}\right)
=\mathsf U(C(E_\sigma)).
\label{eq:gd-known-exact}
\end{equation}
Our
binary construction yields universal bounds for entropy contraction in the same spirit of their reduction. Nevertheless, we make sharpness claim for the 
convexity ratio or reverse ratio.}
\end{remark}
\section{Examples}

\label{sec:examples}
We discuss some examples of conditional expectations and the associated generalized depolarization.
Define the two functions
\begin{equation}
 \mathsf A(x):=\frac12\left(1+
 \frac{x-\ln x-1}{x\ln x-x+1}\right),
 \qquad
 \mathsf B_p(x):=\frac{\phi(1+p(x-1))}{\phi(x)}\,.
 \label{eq:example-bound-functions}
\end{equation}
The values at \(x=1\) are understood by continuity.  The preceding ordinary
and complete bounds give
\begin{equation}
\begin{aligned}
 \alpha(P_t^E)&\ge \mathsf A(C(E)),
 &\eta(\Phi_p^E)&\le \mathsf B_p(C(E)),\\
 \alpha_{\mathrm c}(P_t^E)&\ge \mathsf A(C_{\mathrm{cb}}(E)),
 &\eta_{\mathrm c}(\Phi_p^E)&\le \mathsf B_p(C_{\mathrm{cb}}(E)).
\end{aligned}
\label{eq:example-four-bounds}
\end{equation}
Table~\ref{tab:examples-indices} summarizes the Heisenberg-picture fixed-point
algebras \(\mathcal N=\operatorname{Ran}E^*\) of the examples and their indices.  In the
subsystem-reset row,
\(\lambda_r^\uparrow=\lambda_r^\uparrow(\tau_B)\) denotes the eigenvalues
of \(\tau_B\) in increasing order.
\begin{table}[htbp]
\centering
\caption{Fixed-point algebras and index constants}
\label{tab:examples-indices}
\small
\renewcommand{\arraystretch}{1.3}
\begin{tabularx}{\textwidth}{@{}>{\raggedright\arraybackslash}p{4cm}
  >{\raggedright\arraybackslash}X
  >{\centering\arraybackslash}p{2.75cm}
  >{\centering\arraybackslash}p{2.45cm}@{}}
\toprule
Model & Fixed-point algebra & $C(E)$ & $C_{\mathrm{cb}}(E)$ \\
\midrule
Depolarizing to $\sigma$
  & $\mathbb C\mathds 1$
  & $\lambda_{\min}(\sigma)^{-1}$
  & $\tr(\sigma^{-1})$ \\
\addlinespace
Subsystem reset to $\tau_B$
  & $\mathcal B(\mathcal H_A)\otimes\mathbb C\mathds 1_B$
  & $\sum_{r=1}^{\min(d_A,d_B)}(\lambda_r^\uparrow)^{-1}$
  & $\tr(\tau_B^{-1})$ \\
\addlinespace
$m$-block dephasing
  & $\displaystyle\bigoplus_{k=1}^m\mathcal B(P_k\mathcal H)$
  & $m$
  & $m$ \\
\addlinespace
Compact group twirling
  & $\bigoplus_\alpha
     \mathcal B(M_\alpha)\otimes\mathbb C\mathds 1_{V_\alpha}$
  & $\sum_\alpha d_\alpha\min(m_\alpha,d_\alpha)$
  & $\sum_\alpha d_\alpha^2$ \\
\bottomrule
\end{tabularx}
\end{table}
\subsection{Depolarizing}
The replacer channel $E_\sigma(\rho)=\tr(\rho)\sigma$ is a conditional expectation to the trivial subalgebra $\mathbb{C}1$.
By the formula in Proposition \ref{prop:gd-index-structure}, the indices are
$$C(E_{\sigma})=\|\sigma^{-1}\|_\infty\ ,\  C_{\mathrm{cb}}(E_{\sigma})=\tr(\sigma^{-1}).$$  
\begin{equation}
 \boxed{
 \begin{aligned}
 \alpha(P_t^{E_\sigma})&\ge \mathsf A(\|\sigma^{-1}\|_\infty),
  &\eta(\Phi_p^{E_\sigma})&\le \mathsf B_p(\|\sigma^{-1}\|_\infty),\\
  \alpha_{\mathrm c}(P_t^{E_\sigma})&\ge \mathsf A(\tr(\sigma^{-1})),
  &\eta_{\mathrm c}(\Phi_p^{E_\sigma})&\le \mathsf B_p(\tr(\sigma^{-1})).
 \end{aligned}}
 \label{eq:depolarizing-four-bounds}
\end{equation}
Here we note that the sharp value of $\alpha(P_t^{E_\sigma})$ was discussed in \eqref{eq:gd-known-exact}.

\begin{example}[Thermal reset]
Let \(H=\sum_n h_n|n\rangle\langle n|\) and, for \(\beta\ge0\), let
\(\sigma=e^{-\beta H}/Z_\beta\) be the Gibbs state, where
\(Z_\beta=\sum_n e^{-\beta h_n}\) is normalization constant.  Then
\[
 C(E_\sigma)=Z_\beta e^{\beta h_{\max}},
 \qquad
 C_{\mathrm{cb}}(E_\sigma)
 =Z_\beta\sum_n e^{\beta h_n}.
\]
\end{example}

\begin{example}[Noisy variational circuits]{\rm 
A concrete use of the preceding tensorization bounds arises in variational
circuits \cite{FrancaGarciaPatron2021,YanDuChenMa2025}, where local noise applied after each unitary layer is the standard qubit depolarizing given by
\[ E_2(\rho)=\tr(\rho)\mathds 1_2/2, \qquad
\Phi_p^{E_2}=p\,\id+(1-p)E_2,\] where \(0<p<1\) is the retained
weight.  The relevant system-size-independent statement is the complete
contraction estimate.  Since
\(C_{\mathrm{cb}}(E_2)=4\), our complete index bound gives
\begin{equation}
 p^2=\eta\bigl((\Phi_p^{E_2})^{\otimes n}\bigr)
 \le \eta_{\mathrm c}\bigl((\Phi_p^{E_2})^{\otimes n}\bigr)= \eta_{\mathrm c}(\Phi_p^{E_2})
 \le \mathsf B_p(4)
 =\frac{\phi(1+3p)}{\phi(4)}<p\,.
\label{eq:gd-qubit-circuit-improvement}
\end{equation}
where the non-complete qubit entropy contraction rate $\eta\bigl((\Phi_p^{E_2})^{\otimes n}\bigr)=p^2$ was obtained in \cite{KastoryanoTemme2013,MullerHermesFrancaWolf2016}. Here the advantage of our $\eta_{\mathrm c}$ constant is not only automatic tensor stable but also controls the entropy with environment or reference systems.
}
\end{example}

\begin{remark}{\rm 
In a manuscript in preparation, Hirche, George, Nuradha, and Wilde
\cite{HircheGeorgeNuradhaWilde2026}
show that for generalized depolarizing semigroups with full rank fixed state $\sigma$, the non-complete entropy contraction rates are tensor stable,
\[ 
 \alpha\bigl((P_t^{E_\sigma})^{\otimes n}\bigr)
 =\alpha(P_t^{E_\sigma}),
 \qquad
 \eta\bigl((\Phi_p^{E_\sigma})^{\otimes n}\bigr)
 =\eta(\Phi_p^{E_\sigma}),\ \  \forall \ n\ge 1
\]
Therefore, in the depolarizing case, the non-complete value $\alpha((P_t^{E_\sigma})^{\otimes n})$ and $\eta((\Phi_p^{E_\sigma})^{\otimes n})$ are determined by the single-system constants.
Moreover, they give some criterion on the exact tensorization of entropy contractions. 
Our bounds for complete entropy contraction rate remain meaningful with arbitrary ancilla system. 
}
\end{remark}

\subsection{Dephasing and coherence}
Let $\{P_k\}_{k=1}^m$ be orthogonal projections with
$\sum_k P_k=\mathds 1$.  Block pinching is given by
\[
 E(\rho)=\sum_{k=1}^m P_k\rho P_k.
\]
Its fixed-point algebra is
\(\mathcal N=\bigoplus_{k=1}^m\mathcal B(P_k\mathcal H)\), and the index formula
\eqref{eq:gd-index-formulas} gives
\[
 C(E)=C_{\mathrm{cb}}(E)=m.
\]
The dynamics therefore removes coherence between the blocks while preserving
coherence inside each block, and all four estimates in
\eqref{eq:example-four-bounds} use \(\mathsf A(m)\) and
\(\mathsf B_p(m)\).

For rank-one projections in a chosen basis $\{|k\rangle\}_{k=1}^d$, \(m=d\) and we have the standard pinching map
\[
 \Delta(\rho)=\sum_{k=1}^d|k\rangle\langle k|\rho|k\rangle\langle k|\,,
\]
whose fixed points are the incoherent states.  The relative entropy of
coherence, introduced in~\cite{BaumgratzCramerPlenio2014}, is
\begin{equation}
 C_{\mathrm r}(\rho):=\min_{\delta=\Delta(\delta)}D(\rho\|\delta)
 =D(\rho\|\Delta(\rho))=S(\Delta(\rho))-S(\rho)\,.
 \label{eq:relative-entropy-coherence}
\end{equation}
It is a standard coherence monotone and, under incoherent operations, admits many operational meaning such as
the asymptotic distillable coherence~\cite{WinterYang2016} (see
\cite{StreltsovAdessoPlenio2017} for a review of the resource theory).

Since both \(\Phi_p^\Delta\) and \(P_t^\Delta\) leave
\(\Delta(\rho)\) unchanged, our bounds directly quantify the loss of this
resource:
\begin{equation}
 C_{\mathrm r}(\Phi_p^\Delta(\rho))
 \le \mathsf B_p(d)C_{\mathrm r}(\rho),
 \qquad
 C_{\mathrm r}(P_t^\Delta(\rho))
 \le e^{-2\mathsf A(d)t}C_{\mathrm r}(\rho).
 \label{eq:coherence-decay-bounds}
\end{equation}
For a state \(\rho_{RA}\), with dephasing acting on \(A\) and arbitrary
quantum side information \(R\), the complete version controls
\(D(\rho_{RA}\|(\id_R\otimes\Delta)(\rho_{RA}))\) uniformly in
\(\dim R\).  For higher-rank \(P_k\), the same relative-entropy functional
measures inter-block coherence; related dephasing estimates appear
in~\cite{MunchWirthZhang2024}.

\begin{example}[Hamiltonian dephasing and time-translation symmetry]{\rm 
Let \(\mathcal H=(\mathbb C^2)^{\otimes n}\) and $Z_r$ be the Pauli $Z$-operator on $r$-th qubit. Consider the Hamiltonian
\[
 H=Z_1+\cdots+Z_n,\qquad
 Z_r=\mathds 1^{\otimes(r-1)}\otimes Z\otimes
 \mathds 1^{\otimes(n-r)}\,.
\]
Denote by \[\mathcal H_k=\text{span}\{ |{x_1\cdots x_n}\rangle \ | \ x_i\in \{0,1\}, \sum_{i=1}^n x_i=k \} \] the span of computational basis vectors of
Hamming weight \(k\), and by \(P_k\) its projection.  Then
\[
 \mathcal H=\bigoplus_{k=0}^n\mathcal H_k,\qquad
 H=\sum_{k=0}^n(n-2k)P_k,\qquad
 \dim\mathcal H_k=\binom nk\,.
\]
For the time-translation representation
\(\pi(\theta)=e^{-i\theta H}\), the invariant subalgebra has the concrete form
\begin{equation}
 \mathcal N=\pi(U(1))'=\{H\}'
 =\bigoplus_{k=0}^n\mathcal B(\mathcal H_k)
 =\left\{\sum_{k=0}^n P_k x P_k:x\in\mathcal B(\mathcal H)\right\}.
 \label{eq:time-translation-commutant}
\end{equation}
The corresponding conditional expectation is given by the time average is
\begin{equation}
 E_H(\rho)=\frac1{2\pi}\int_0^{2\pi}
 e^{-i\theta H}\rho e^{i\theta H}\,d\theta
 =\sum_{k=0}^n P_k\rho P_k\,,
 \label{eq:time-translation-twirl}
\end{equation}
which is also the Pinching map onto the commutant $\{H\}'$ of $H$.
It removes coherence between distinct energy sectors while retaining
coherence within each degenerate sector.  The quantity
\[
 D(\rho\|E_H(\rho))=S(E_H(\rho))-S(\rho)
\]
is therefore the relative entropy asymmetry under time translations \cite{Marvian2022}.
Since \(C(E_H)=C_{\mathrm{cb}}(E_H)=n+1\), our MLSI estimate becomes
\begin{equation}
 D\!\left(P_t^{E_H}(\rho)\middle\|E_H(\rho)\right)
 \le e^{-2\mathsf A(n+1)t}D(\rho\|E_H(\rho)),
 \qquad t\ge0\,.
 \label{eq:time-translation-entropy-decay}
\end{equation}
Hence inter sector coherence decays exponentially at a rate controlled only
by the number \(n+1\) of energy sectors; the same estimate holds uniformly
after adjoining an arbitrary reference system. }
\end{example}

\subsection{Compact group asymmetry}
Let \(G\) be a finite or compact group and let
\(\pi:G\to U(\mathcal H)\) be a projective unitary
representation.  The Haar twirling channel is
\[
 E_G(\rho)=\int_G\pi(g)\rho \pi(g)^*\,d\mu(g).
\]
$E_G=E_G^*$ is a trace-preserving conditional expectation.  Indexing the
irreducible sectors occurring in \(\pi\) by \(\alpha\), write the irreducible
decomposition as
\[
 \mathcal H=\bigoplus_\alpha M_\alpha\otimes V_\alpha,
 \qquad
 \pi(g)=\bigoplus_\alpha
 \mathds 1_{M_\alpha}\otimes\nu_\alpha(g),
\]
where \(m_\alpha:=\dim M_\alpha\) is the multiplicity and
\(d_\alpha:=\dim V_\alpha\) is the dimension of the 
irreducible sector.
Schur's lemma gives
\[
 \mathcal N=\pi(G)'
 =\bigoplus_\alpha\mathcal B(M_\alpha)\otimes
   \mathbb C\mathds 1_{V_\alpha},
 \qquad
 E_G(\rho)=\bigoplus_\alpha
 \tr_{V_\alpha}(P_\alpha\rho P_\alpha)
 \otimes\frac{\mathds 1_{V_\alpha}}{d_\alpha}.
\]
The index formulas reduce to
\[
C(E_G)=\sum_\alpha d_\alpha\min(m_\alpha,d_\alpha),
\qquad
C_{\mathrm{cb}}(E_G)=\sum_\alpha d_\alpha^2.
\]
Here \[ D(\rho\|E_G(\rho))=H(E_G(\rho))-H(\rho)\]
is the asymmetry of relative entropy with respect to the $G$-symmetry \cite{MarvianSpekkens2014}.

\begin{example}[Collective \(SU(2)\) rotations and total spin]{\rm 
For \(G=SU(2)\), let \(\pi(u)=u^{\otimes n}\) act on \(n\)
spin-\(\tfrac12\) particles $(\mathbb{C}^2)^{\otimes n}$. The
Clebsch--Gordan decomposition gives
\begin{equation}
 (\mathbb C^2)^{\otimes n}\cong
 \bigoplus_{j=j_{\min}}^{n/2}M_j\otimes V_j,
 \qquad
 j_{\min}=\begin{cases}0,&n\text{ even},\\ \frac12,&n\text{ odd},\end{cases}
 \label{eq:su2-spin-decomposition}
\end{equation}
where $V_j\cong\mathbb C^{2j+1}$ is the spin-$j$ irrep and $M_j$ its
multiplicity space.  Explicitly,
\begin{equation}
 m_j:=\dim M_j
 =\binom{n}{n/2-j}-\binom{n}{n/2-j-1}
 =\frac{2j+1}{n/2+j+1}\binom{n}{n/2-j}\,,
 \label{eq:su2-multiplicity}
\end{equation}
where out-of-range binomial coefficients are zero.  In this basis
\[
 \pi(u)=u^{\otimes n}
 =\bigoplus_j\mathds 1_{M_j}\otimes u^{(j)}.
\]

Schur's lemma gives the collective twirl
\begin{equation}
 E_{SU(2)}(\rho)=\bigoplus_j\tr_{V_j}(P_j\rho P_j)
 \otimes\frac{\mathds 1_{V_j}}{2j+1}\,.
 \label{eq:su2-twirl}
\end{equation}
Consequently,
\[
 C(E_{SU(2)})=\sum_j(2j+1)\min\{m_j,2j+1\},\qquad
 C_{\mathrm{cb}}(E_{SU(2)})=\sum_j(2j+1)^2\,.
\]
In particular, \(C_{\mathrm{cb}}(E_{SU(2)})=\Theta(n^3)\), so
\eqref{eq:example-four-bounds} gives complete contraction estimates governed
by a polynomial index even though \(\dim\mathcal H=2^n\).}
\end{example}
Collective \(SU(2)\) noise above models an ensemble of spins undergoing the same
unknown spatial rotation, as occurs without a shared Cartesian reference
frame or under spatially correlated control noise.  The multiplicity
subsystems are insensitive to this collective action and therefore provide
natural encodings for reference-frame-independent communication and
decoherence-free quantum information processing
\cite{BartlettRudolphSpekkens2007}.

\appendix

\label{sec:appendix}

\section{Asymptotics of the binary relative entropy}
\label{app:binary-asymptotics}
Recall the binary relative entropy function
\[
 D_2(u\|v):=u\ln\frac uv+(1-u)\ln\frac{1-u}{1-v}\ ,\  0<u,v<1.
\]
For $C>1$, set
\[
 \qquad s=C^{-1},\qquad
 x_C=(1+\ln C)^{-1},\qquad
 y_C=px_C+(1-p)s.
\]
As $C\to\infty$, direct expansion of the binary relative entropy gives
\begin{align*}
 D_2(s\|x_C)
 &=\frac1{\ln C}+O\!\left(\frac1{(\ln C)^2}\right),\\
 D_2(x_C\|s)
 &=1-\frac{\ln \ln C+2}{\ln C}
   +O\!\left(\frac{\ln \ln C}{(\ln C)^2}\right),\\
 D_2(y_C\|s)
 &=p+\frac{p\ln p-p\ln \ln C-2p}{\ln C}
   +O_p\!\left(\frac{\ln \ln C}{(\ln C)^2}\right).
\end{align*}
Consequently,
\begin{align*}
 \frac{D_2(s\|x_C)}{D_2(x_C\|s)}
 &=\frac1{\ln C}+O\!\left(\frac{\ln \ln C}{(\ln C)^2}\right),\\
 \frac{D_2(y_C\|s)}{D_2(x_C\|s)}
 &=p+\frac{p\ln p}{\ln C}
   +O_p\!\left(\frac{\ln \ln C}{(\ln C)^2}\right).
\end{align*}
On the other hand, the reverse and convexity comparisons from
Theorem~\ref{thm:relative-entropy-comparisons} yield
\begin{align*}
 \frac12\left(1+\frac{C-1-\ln C}{C\ln C-C+1}\right)
 &\le \mathsf U(C)
 \le \frac12\left(1+
 \frac{D_2(s\|x_C)}{D_2(x_C\|s)}\right),\\
 \frac{D_2(y_C\|s)}{D_2(x_C\|s)}
 &\le \mathsf L_p(C)
 \le \frac{\phi(1+p(C-1))}{\phi(C)}.
\end{align*}
Finally,
\[
 \frac{C-1-\ln C}{C\ln C-C+1}
 =\frac1{\ln C}+O\!\left(\frac1{(\ln C)^2}\right),
 \qquad
 \frac{\phi(1+p(C-1))}{\phi(C)}
 =p+\frac{p\ln p}{\ln C}+O_p\!\left(\frac1{(\ln C)^2}\right).
\]
Combining these two-sided estimates proves
\eqref{eq:gd-binary-bound-asymptotics}.

\section{Multiplicativity of the complete index}
\begin{proposition}
\label{cor:app-cb-multiplicative}
For faithful finite-dimensional conditional expectations $E_1$ and $E_2$,
\[
C_{\mathrm{cb}}(E_1\otimes E_2)
=C_{\mathrm{cb}}(E_1)C_{\mathrm{cb}}(E_2)\,.
\]
\end{proposition}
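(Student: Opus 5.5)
The plan is to use the block index formula of Proposition~\ref{prop:gd-index-structure}, which gives $C_{\mathrm{cb}}(E)=\sum_i\tr(\tau_i^{-1})$. For a faithful conditional expectation $E_1$ with blocks $\mathcal H=\bigoplus_i\mathcal H_i\otimes\mathcal K_i$ and densities $\tau_i$, and $E_2$ with blocks $\bigoplus_j\mathcal H'_j\otimes\mathcal K'_j$ and densities $\tau'_j$, I first identify the block structure of $E_1\otimes E_2$. After the obvious reordering of tensor factors (a unitary equivalence), we have
\[
\mathcal H\otimes\mathcal H'=\bigoplus_{i,j}(\mathcal H_i\otimes\mathcal H'_j)\otimes(\mathcal K_i\otimes\mathcal K'_j).
\]
Take the projections $P_i\otimes P'_j$. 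The fixed-point algebra $\mathcal N_1\otimes\mathcal N_2=\bigoplus_{i,j}\mathcal B(\mathcal H_i\otimes\mathcal H'_j)\otimes\mathbb C\mathds 1$ is again of the required form.

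Next I apply \eqref{eq:gd-Estar-block} factorwise to $\rho_1\otimes\rho_2$:
\[
(E_1\otimes E_2)(\rho_1\otimes\rho_2)=\bigoplus_{i,j}\tr_{\mathcal K_i\otimes\mathcal K'_j}\bigl((P_i\otimes P'_j)(\rho_1\otimes\rho_2)(P_i\otimes P'_j)\bigr)\otimes(\tau_i\otimes\tau'_j).
\]
By linearity this identity extends to all $\rho$. So $E_1\otimes E_2$ is a faithful conditional expectation with block densities $\tau_i\otimes\tau'_j$. Faithfulness also follows directly, since a tensor product of faithful invariant states is faithful and invariant.

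Applying the $C_{\mathrm{cb}}$ formula to this decomposition gives
\[
C_{\mathrm{cb}}(E_1\otimes E_2)=\sum_{i,j}\tr\bigl((\tau_i\otimes\tau'_j)^{-1}\bigr)=\sum_{i,j}\tr(\tau_i^{-1})\tr(\tau_j'^{-1}),
\]
which factorizes as $C_{\mathrm{cb}}(E_1)C_{\mathrm{cb}}(E_2)$.

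The main obstacle is justifying that the reordered decomposition really is an instance of the canonical form. Two points need checking. First, the block formula of Proposition~\ref{prop:gd-index-structure} must be invariant under unitary equivalence; this holds because $C(E)$ is defined by an order relation that conjugation preserves. Second, the decomposition of $E_1\otimes E_2$ need not be minimal: distinct pairs $(i,j)$ could give blocks that the canonical form would merge. This does not matter, because the proof of Proposition~\ref{prop:gd-index-structure} only uses the stated form of $E$, which forces the $(i,j)$ central projections to be the minimal central projections of $\mathcal N_1\otimes\mathcal N_2$.

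As an independent sanity check, the stabilization step in that proof, with $k$ large and $km_i\ge n_i$, passes to tensor products because $k m_i m'_j\ge n_in'_j$ whenever each factor satisfies its own inequality. Alternatively, $E\otimes\id_{M_k}\otimes\id_{M_l}$ shows that $C_{\mathrm{cb}}$ is unchanged by amplification, which is consistent with the product formula.
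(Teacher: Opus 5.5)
Your proof is correct and follows the paper's argument: you identify the block densities of $E_1\otimes E_2$ as $\tau_i\otimes\tau'_j$ and factor $\sum_{i,j}\tr\bigl((\tau_i\otimes\tau'_j)^{-1}\bigr)=\sum_i\tr(\tau_i^{-1})\sum_j\tr\bigl((\tau'_j)^{-1}\bigr)$, and your explicit reordering of tensor factors supplies the block-structure check that the paper leaves implicit. Two harmless slips: the worry that $(i,j)$ blocks might be merged is vacuous, since each $\mathcal B(\mathcal H_i\otimes\mathcal H'_j)$ is a factor; and in your sanity check, $km_i\ge n_i$ and $km'_j\ge n'_j$ give only $k^2m_im'_j\ge n_in'_j$, not $km_im'_j\ge n_in'_j$, which does not matter because $k$ can simply be taken larger.
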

\begin{proof}
If the block states of $E_1$ and $E_2$ are
$\{\tau_i^{(1)}\}_i$ and $\{\tau_j^{(2)}\}_j$, then those of
$E_1\otimes E_2$ are
$\{\tau_i^{(1)}\otimes\tau_j^{(2)}\}_{i,j}$.  By
\eqref{eq:gd-index-formulas},
\begin{align*}
C_{\mathrm{cb}}(E_1\otimes E_2)
=\sum_{i,j}\tr\!\left[
  (\tau_i^{(1)}\otimes\tau_j^{(2)})^{-1}\right]
=\sum_{i,j}
  \tr[(\tau_i^{(1)})^{-1}]
  \tr[(\tau_j^{(2)})^{-1}]
=C_{\mathrm{cb}}(E_1)C_{\mathrm{cb}}(E_2),
\end{align*}
where we used
$(\tau_i^{(1)}\otimes\tau_j^{(2)})^{-1}
=(\tau_i^{(1)})^{-1}\otimes(\tau_j^{(2)})^{-1}$ and multiplicativity of
the trace on tensor products.
\end{proof}

\bibliographystyle{plain}
\bibliography{entropycontraction}

\end{document}